\def\BibTeX{{\rm B\kern-.05em{\sc i\kern-.025em b}\kern-.08em
    T\kern-.1667em\lower.7ex\hbox{E}\kern-.125emX}}
\theoremstyle{plain}
\newcounter{subeqn} %
\newtheorem{theorem}{Theorem}
\newtheorem{lemma}[theorem]{Lemma}
\def\CN{{\mathcal N}}
\def\CI{{\mathcal I}}
\def\CB{{\mathcal B}}
\def\CK{{\mathcal K}}
\def\CT{{\mathcal T}}
\def\CG{{\mathcal G}}
\DeclareMathOperator*{\argmin}{argmin}
\begin{document}
\captionsetup[figure]{name={Fig.},labelsep=period}
\title{Energy Minimization in UAV-Aided Networks: Actor-Critic Learning for Constrained Scheduling
 Optimization}

\author[ ]{Yaxiong Yuan, \textit{Student Member, IEEE},~~Lei Lei, \textit{Member, IEEE}, Thang X. Vu,  \textit{Member, IEEE}, \newline Symeon Chatzinotas, \textit{Senior Member, IEEE}, Sumei Sun, \textit{Fellow, IEEE}, and  Bj\"{o}rn Ottersten, \textit{Fellow, IEEE}
\thanks{The work has been supported by the ERC project AGNOSTIC (742648), by the FNR CORE projects ROSETTA (C17/IS/11632107), ProCAST (C17/IS/11691338) and 5G-Sky (C19/IS/13713801), and by the FNR bilateral project LARGOS (12173206).}
\thanks{Yaxiong Yuan, Lei Lei, Thang X. Vu, Symeon Chatzinotas, and  Bj\"{o}rn Ottersten are with the Interdisciplinary
Centre for Security, Reliability and Trust, Luxembourg University, 1855 Kirchberg, Luxembourg (e-mail: yaxiong.yuan@uni.lu; lei.lei@uni.lu; thang.vu@uni.lu; symeon.chatzinotas@uni.lu; bjorn.ottersten@uni.lu).}
\thanks{S. Sun is with the Institute for Infocomm Research, Agency for Science, Technology, and Research, Singapore 138632 (e-mail: sunsm@i2r.a-star.edu.sg).}
\thanks{Part of this paper has been presented at IEEE EuCNC, June 2020 \cite{eucnc}.}
}

\IEEEtitleabstractindextext{%
\begin{abstract}
In unmanned aerial vehicle (UAV) applications, the UAV's limited energy supply and storage have triggered the development of intelligent energy-conserving scheduling solutions.
In this paper, we investigate energy minimization for UAV-aided communication networks by jointly optimizing data-transmission scheduling and UAV hovering time.
The formulated problem is combinatorial and non-convex with bilinear constraints. 
To tackle the problem, firstly, we provide an optimal relax-and-approximate solution and develop a near-optimal algorithm.   
Both the proposed solutions are served as offline performance benchmarks but might not be suitable for online operation.
To this end, we develop a solution from a deep reinforcement learning (DRL) aspect.
The conventional RL/DRL, e.g., deep Q-learning, however, is limited in dealing with two main issues in constrained combinatorial optimization, i.e., exponentially increasing action space and infeasible actions.
The novelty of solution development lies in handling these two issues.
To address the former, we propose an actor-critic-based deep stochastic online scheduling (AC-DSOS) algorithm and develop a set of approaches to confine the action space.  
For the latter, we design a tailored reward function to guarantee the solution feasibility.
Numerical results show that, by consuming equal magnitude of time, AC-DSOS is able to provide feasible solutions and saves 29.94\% energy compared with a conventional deep actor-critic method.
Compared to the developed near-optimal algorithm, AC-DSOS consumes around 10\% higher energy but reduces the computational time from minute-level to millisecond-level.

\end{abstract}

\begin{IEEEkeywords}
UAV, deep reinforcement learning, user scheduling, hovering time allocation, energy optimization, actor-critic.
\end{IEEEkeywords}}
%
%
\maketitle
 \thispagestyle{fancy} 
      \lhead{} 
      \chead{} 
      \rhead{} 
      \lfoot{} 
      \cfoot{} 
      \rfoot{\thepage} 
      \renewcommand{\headrulewidth}{0pt} 
      \renewcommand{\footrulewidth}{0pt} 

 \pagestyle{fancy}
      \rfoot{\thepage}

%
%
\IEEEdisplaynontitleabstractindextext
%
%
%
%
%
%
%
%
%
%
\section{Introduction}

Unmanned aerial vehicles (UAVs) have attracted much attention to high-speed data transmission in dynamic, distributed, or plug-and-play scenarios, e.g., disaster rescue, live concert, or sports events \cite{Mozaffariuav}.
However, UAVs' limited endurance, energy supply, and storage become critical issues for its applications, which motivates the study of energy efficiency in UAV-aided communication networks.
The UAV's energy consumption comes from two aspects, propulsion energy for flying and hovering, and communication energy for data transmission. 
The flying energy mainly depends on the UAV's velocity and trajectory \cite{Mozaffariuav}.
The hovering energy is in general proportional to the hovering time.
Compared to the propulsion energy, the communication energy consumption is not a negligible part, e.g., considerable communication energy can be consumed in the scenarios with high traffic requests from a large number of users.
Thus joint energy optimization for both parts is necessary and has attracted considerable attention in the literature \cite{Tranuav, ahmeduavenergy, ZhangJuav, zengenergyuav, zhuuavantenna, songenergyuav}.

The authors in \cite{ahmeduavenergy, ZhangJuav} maximized the energy efficiency, referring to the ratio between transmitted data and propulsion energy.
In \cite{zengenergyuav}, the authors introduced a complete UAV energy model and proposed a user-timeslot scheduling method to minimize the sum of the propulsion energy and communication energy.
Based on the energy model in \cite{zengenergyuav}, the authors formulated an energy minimization problem with latency constraints by trajectory design in \cite{Tranuav}.
The above works in \cite{ahmeduavenergy, ZhangJuav, zengenergyuav, Tranuav} adopted a time division multiple access (TDMA) mode, where the UAV serves one user per timeslot.
Besides TDMA, space division multiple access (SDMA) enables simultaneous data transmission to multiple users, such that the hovering time and hovering energy can be reduced.
In \cite{zhuuavantenna}, the authors designed an SDMA-based beamforming scheme to minimize the total transmit power for multi-antenna UAVs. 
In \cite{songenergyuav}, an energy efficiency maximization problem was investigated in an SDMA-based multi-antenna UAV network via optimizing the flying velocity and power allocation. 
However, serving multiple users simultaneously may lead to strong inter-user interference and may require more communication energy to fulfill users' demands.

Deterministic optimization algorithms, e.g., \cite{Tranuav, ahmeduavenergy, ZhangJuav, zengenergyuav, zhuuavantenna, songenergyuav} might not be suitable for fast decision making in a dynamic wireless environment.
To address this issue, deep learning-based solutions have been investigated in the literature.
The authors in \cite{jianguavdr} applied a deep neural network (DNN) for UAV-enabled hybrid networks to efficiently predict the resource allocation scheme.
In \cite{Shindluav}, a deep learning-based auction algorithm was proposed to determine a dynamic battery charging scheduling for UAV-aided systems.
Supervised learning, such as DNN, requires large amounts of training data, which is a non-trivial task in an offline manner \cite{leischeduling}.
Another category of studies is deep reinforcement learning (DRL), with the following advantages.
Firstly, DRL provides timely solutions, adapted to environment variations.
Secondly, DRL integrates DNN to make decisions and improve solution quality.
Thirdly, DNN requires an offline data generating and training phase, whereas DRL is less needed for prior knowledge and is able to train by exploring unknown environments and exploiting received feedbacks in an online manner.
In \cite{Liu2uavenergy}, the authors applied a deep Q network (DQN) to design an energy-efficient flying trajectory scheme for UAV-aided networks.
In general, DQN is used to deal with a relatively small and discrete action space, where the action space refers to the set of all possible decisions \cite{IntroRL}. 
The authors in \cite{Hedqnlarge} designed a different deep Q-learning architecture with a high dimensional action space, but it needs to evaluate all of the actions before making a decision, which is time-consuming.

Deep actor-critic is an emerging DRL method with fast convergent properties and the capability to deal with a large action space \cite{Kondaac}.  
In \cite{liuuav}, an actor-critic-based DRL (AC-DRL) algorithm was proposed to reduce the UAV's energy consumption and enhance the UAV's coverage of ground users via optimizing UAV's flying direction and distance.
In \cite{Liuuavenergy2}, the authors employed deep actor-critic to design a learning algorithm for UAV-aided systems, considering energy efficiency and users' fairness.
Note that the AC-DRL in \cite{liuuav, Liuuavenergy2} was developed for unconstrained problems.
However, most of the problems in UAV systems are constrained and with discrete variables.
The conventional AC-DRL algorithms have limitations on tackling constrained combinatorial optimization problems, which may result in slow convergent, infeasible, and degraded solutions.
The authors in \cite{Caouavenergy} developed an AC-DRL algorithm for a combinatorial optimization problem in a UAV-aided system, but when the size of the action space grows exponentially, the convergence of the algorithm deteriorates.
In \cite{eucnc}, the authors applied a conventional AC-DRL approach to address an energy minimization problem in UAV networks, where the performance is limited by feasibility guarantee and rapidly-increasing action space.

In this study, we minimize the UAV's communication and propulsion energy in a downlink UAV-aided communication system.
The novelty of solution development lies in two aspects. 
Firstly, compared to offline optimization approaches, we provide online learning and timely energy-saving solutions based on DRL. 
Secondly, unlike the conventional DRL or AC-DRL methods, the proposed solution is designed to address the challenging issues in constrained combinatorial optimization. 
The major contributions are summarized as follows: 

\begin{itemize}
\item 
We formulate an energy minimization problem for an SDMA-enabled UAV communication system, where user-timeslot allocation and UAV's hovering time assignment are the coupled optimization tasks. 
The formulated problem is combinatorial and non-convex with bilinear constraints.
\item
We provide a relax-and-approximate method to approach the optimum. 
That is, the bilinear terms are addressed by McCormick envelop relaxation, then the remaining integer linear programming problem is solved by the branch-and-bound (B\&B) algorithm.
\item
We characterize the interplay among communication energy, hovering time, and hovering energy. Based on the derived analytical results, we develop a golden section search-based heuristic (GSS-HEU) algorithm for benchmarking general instances with lower complexity than the optimal solution. 
\item
Being aware of the issues in optimal/sub-optimal and conventional DRL approaches, we propose an actor-critic-based deep stochastic online scheduling (AC-DSOS) algorithm, where the original problem is transformed to a Markov decision process (MDP). 
Unlike conventional AC-DRL solutions, in AC-DSOS, we design a set of approaches, e.g., stochastic policy quantification, action space reduction, and feasibility-guaranteed reward function design, to specifically address the constrained combinatorial problem.   
\item 
Simulations demonstrate that the proposed AC-DSOS enables a feasible, fast-converging, and dynamically-adaptive solution. 
The designed approaches are effective in reducing action space and guaranteeing feasibility.
AC-DSOS achieves 29.94\% and 52.51\% energy reduction compared with a conventional AC-DRL method and a heuristic user scheduling method with almost the same computation time.
\end{itemize}

The rest of the paper is organized as follows.
Section \ref{sec:2} provides the system model and Section \ref{sec:2.2} formulates the considered optimization problem.
In Section \ref{sec:3}, we analyze the relationship between the energy consumption and hovering time, and propose a heuristic algorithm.
In Section \ref{sec:4}, we reformulate the problem as an MDP and develop an AC-DSOS algorithm.
Numerical results are presented and analyzed in Section \ref{sec:5}.
Finally, we draw the conclusions in Section \ref{sec:6}.

\textit{The codes for generating the results are online available at the link: https://github.com/ArthuretYuan}.

\section{System Model and Problem Formulation}\label{sec:2}
\subsection{System Model}

We consider a downlink UAV-aided communication system. 
A UAV serves as an aerial base station (BS) to deliver data to ground users, e.g., for the scenarios if terrestrial BSs are unavailable or overloaded by high traffic demand from numerous users.
We assume that the UAV is equipped with $L$ antennas and each ground user has a single antenna \cite{songenergyuav}.
The UAV is fully loaded with data and energy at a dock station before the task starts.
The service area is divided into $N$ clusters considering the UAV's limited coverage area.
This setup can be used in many practical scenarios such as emergency rescue and temporary communication\cite{Zhangcluster, Yinuavcluster}.
We denote $\CN=\{1,...,n,...,N\}$ as the set of clusters and $\CN^{+}=\CN \cup \{N+1\}$ as the extended set, where the $(N+1)$-th cluster denotes the dock station.
The UAV flies through all the clusters successively according to a pre-optimized trajectory, and transmits data to the users by hovering at a given point, e.g., above the cluster's center. 
Let $K_n$ and $\CK_n$ denote the number and set of the users in the $n$-th cluster.
The demands of user $k \in \CK_n$ are denoted by $q_{k,n}$ (in bits).
When all the demands in a cluster are satisfied, the UAV leaves the current cluster and visits the next one.
After serving all the clusters, the UAV flies back to the dock station.
The process of the UAV from leaving to returning the dock station is defined as a round or a task.
Fig. \ref{fig:UAV} illustrates an example of the considered system.

\begin{figure}[h]
\begin{center}
\centering
\vspace{-0.2cm}  
\includegraphics[scale=0.58]{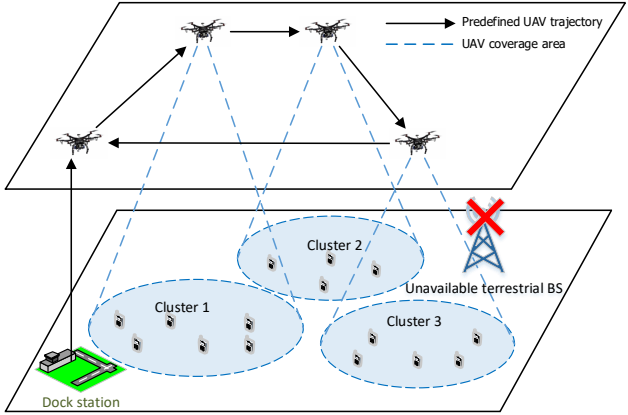}
\setlength{\abovecaptionskip}{0.2cm}   
\setlength{\belowcaptionskip}{-0.6cm}   
\captionsetup{font={small}}
\caption{An illustrative UAV-aided network.}
\label{fig:UAV}
\end{center}
\end{figure}

The data stored in the UAV typically has a certain life span \cite{lifespan}.
Thus, we consider the transmitted data is delay-sensitive, and all data delivery must be completed within $T_{max}$ (in frames), where the time domain is divided by frames in set $\CT = \{1,...,t,...,T_{max}\}$.
One frame consists of $I$ timeslots, and the duration of a timeslot is $\Phi$.
With SDMA, the UAV can simultaneously transmit data to more than one user in each timeslot.
The frame-timeslot structure is shown in Fig. \ref{fig:schduling}, where the shaded blocks indicate that the users are scheduled.
We define the scheduled users at a timeslot as a user group. 
The union of the possible groups in cluster $n$ is denoted by $\mathcal{G}_n=\{1,...,g,...,G_n\}$.
The maximum number of candidate groups in cluster $n$ is $G_n = 2^{K_n}-1$ \cite{yuandnn}, which increases exponentially with $K_n$.
The number and set of the users of group $g$ in cluster $n$ are denoted by $K_{g,n}$ and $\mathcal{K}_{g,n}$, respectively.

\begin{figure}[h]
\begin{center}
\centering
\vspace{-0.2cm}  
\includegraphics[scale=0.55]{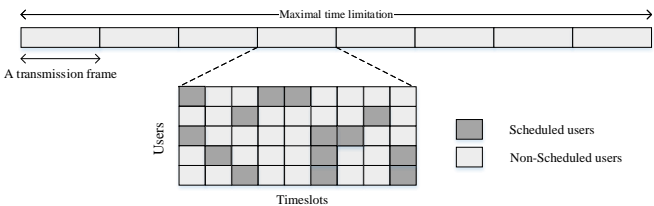}
\setlength{\abovecaptionskip}{-0.1cm}   
\setlength{\belowcaptionskip}{-0.2cm}   
\captionsetup{font={small}}
\caption{An illustration of the frame-timeslot structure.}
\label{fig:schduling}
\end{center}
\end{figure}

We consider a quasi-static Rician fading channel which comprises both a deterministic line-of-sight (LoS) component and a random multipath component \cite{Youricianuav}.
The channel states are static within a transmission frame, and varying from one frame to another.
The channel vector from the UAV antennas to ground user $k \in \CK_n$ is denoted as $\mathbf{h}_{k,n}\in \mathbb{C}^{1\times L}$, which can be expressed by $\boldsymbol{\alpha}_{k,n}10^{-\xi_{k,n}/10}$, where $\boldsymbol{\alpha}_{k,n} \in \mathbb{C}^{1\times L}$ is the multipath Rician fading vector and $\xi_{k,n}$ is the free-space propagation loss between the UAV and ground user $k \in \CK_n$.
We collect all the channel vectors of the users in $\mathcal{K}_{g,n}$ to form a matrix $\mathbf{H}_{g,n} \in \mathbb{C}^{K_{g,n} \times L}$.
Within a user group, we apply a linear minimum mean square error (MMSE) precoding scheme due to its high efficiency and low computational complexity in mitigating intra-group interference.
The precoding vector for user $k \in \mathcal{K}_{g,n}$ is calculated by:
\begin{align}
\mathbf{w}_{k,g,n}=\sqrt{p_{k,g,n}}\frac{\tilde{\mathbf{h}}_{k,g,n}}{\|\tilde{\mathbf{h}}_{k,g,n}\|},
\end{align}
where $p_{k,g,n}$ is the transmit power for user $k$ in group $g$, $\tilde{\mathbf{h}}_{k,g,n}$ is the $k$-th column in $\mathbf{H}_{g,n}^{\textit{H}}(\sigma^2\mathbf{I}+\mathbf{H}_{g,n}\mathbf{H}_{g,n}^{\textit{H}})^{-1}$, and $\sigma^2$ is the noise power.
Note that transmit power $p_{k,g,n}$ is fixed as parameters in this work by following practical UAV applications, e.g., constant transmit power can be selected from 0.1 W to 10 W \cite{Yansurveyuav}.
The signal-to-interference-plus-noise ratio (SINR) for the user $k \in \mathcal{K}_{g,n}$ is given by:
\begin{align}
\label{SINR}
&\Gamma_{k,g,n}=\frac{\beta^{(kk)}_{g,n} p_{k,g,n}}{\sum_{j\in \CK_{g,n}\setminus\{k\}}\beta^{(kj)}_{g,n} p_{j,g,n}+\sigma^2}, \notag\\
& k \in \CK_{g,n},~g\in \CG_n, 
\end{align}
where $\beta^{(kk)}_{g,n}=|\mathbf{h}_{k,n}\tilde{\mathbf{h}}_{k,g,n}|^2$ and $\beta^{(kj)}_{g,n}=|\mathbf{h}_{k,n}\tilde{\mathbf{h}}_{j,g,n}|^2$ are the effective channel gains.
Since the channel states vary over frames, we use $\Gamma_{k,g,n,t}$, $\beta^{(kk)}_{g,n,t}$ and $\beta^{(kj)}_{g,n,t}$ to track SINR and channel coefficients on the $t$-th frame.
In this work, the time-varying channel is further modeled as a first state Markov channel (FSMC).
Under the FSMC, we quantify each coefficient $\beta^{(kk)}_{g,n,t}$ and $\beta^{(kj)}_{g,n,t}$ to multiple Markov states and obtain a transition probability such that the variations of $\beta^{(kk)}_{g,n,t}$ and $\beta^{(kj)}_{g,n,t}$ follow a Markov process between frames \cite{FSMC}.

If group $g \in \CG_n$ is scheduled at timeslot $i$ on frame $t$, the amount of data transmitted to user $k \in \CK_{g,n}$ and the consumed communication energy of group $g \in \CG_n$ can be expressed by:
\begin{align}
\label{kg_rate}
&d_{k,g,n,t} = \Phi B \log_2\left(1 + \Gamma_{k,g,n,t} \right),\notag\\ 
&k \in \CK_{g,n},~g\in \CG_n,~t\in \CT,
\end{align}
and
\begin{align}
\label{pgn_power}
e_{g,n,t} = \Phi \sum_{k \in \CK_{g,n}} \beta^{(kk)}_{g,n,t} p_{k,g,n},\,\,\, g\in \CG_n,~t\in \CT,
\end{align}
where $B$ is the system bandwidth. 
Note that within a  frame, we assume a user's channel condition is identical across all the timeslots, thus index $i$ is omitted in $d_{k,g,n,t}$ and $e_{g,n,t}$. 

\subsection{UAV's Energy Model}
We employ a UAV energy model proposed in \cite{zengenergyuav}.
The flying power is formulated as a function $\mathit{f}(U)$ of flying velocity $U$:
\begin{align}
\label{eq:power}
\mathit{f}(U) = &P_0\left(1+\frac{3U^2}{U_{tip}^2}\right)+P_1\left(\sqrt{1+\frac{U^4}{4U_{ind}^4}}-\frac{U^2}{2U_{ind}^2}\right)^{\frac{1}{2}}\notag \\
 &+\frac{1}{2}\rho_{1}\rho_{2}U^3,
\end{align}
where 
\begin{itemize}
\item $P_0$: the blade profile power in hovering status;
\item $P_1$: the induced power in hovering status;
\item $U_{tip}$: the tip speed of the rotor blade;
\item $U_{ind}$: the mean rotor induced velocity;
\item $\rho_1$: the parameter related to the fuselage drag ratio, rotor solidity, and the rotor disc area;
\item $\rho_2$: the air density.
\end{itemize}
When UAV approaches the hovering point of each cluster, it will fly around the point with
a certain velocity $U=U_{hov}$, which is more energy-efficient than $U=0$ \cite{Tranuav}.
Thus, the hovering power $P_H$ is $f(U=U_{hov})$.
The flying energy with constant velocity $U$ and traveling distance $S$ is expressed as:
\begin{align}
\label{eq:flyingenergy}
&\textit{f}(U)\cdot S/U \notag\\ 
=&~ SP_0\left(\frac{1}{U}+\frac{3U}{U_{tip}^2}\right)+SP_1\left(\sqrt{\frac{1}{U^4}+\frac{1}{4U_{ind}^4}}-\frac{1}{2U_{ind}^2}\right)^{\frac{1}{2}} \notag\\
 &+\frac{S}{2}\rho_{1}\rho_{2}U^2.
\end{align}
Hovering energy and communication energy need to be jointly optimized since they are coupled by hovering time, whereas the optimization of flying energy is independent.
By applying graph-based numerical methods \cite{flyinguav}, the minimum flying energy $E^*_{_F}$ along with the optimal flying speed $U^*_{_F}$ can be obtained by: 
\begin{align}
\label{eq:minflyenergy}
E_{_F}^*=\textit{f}(U_{_F}^*)\cdot S/U_{_F}^*,
\end{align}
where $U_{_F}^* = \argmin_{U\geq 0} \frac{\textit{f}(U)}{U}$.

The main notations are summarized in Table \ref{tab:notations}.
\begin{table}[h]
\footnotesize
\centering
\caption{Summary of Symbols and Notations}
\label{tab:notations}
\begin{tabular}{|c||c|}
 \hline
 Notation & Description\\
 \hline
 $N, \CN$ & number and set of clusters \\
 \hline
 $L$ & number of antennas in UAV\\
 \hline
 $K_n, \CK_n$ & number and set of users in cluster $n$\\ 
 \hline
 $G_n, \CG_n$ & number and set of groups in cluster $n$\\
 \hline
 $K_{g,n}, \CK_{g,n}$ & number and set of users in group $g$ of cluster $n$  \\ 
 \hline
 $q_{k,n}$ & demands of user $k$ in cluster $n$  \\ 
  \hline
 $T_{max}, \CT$ & maximum number and set of frames in each round\\
 \hline
 $I, \CI$ & number and set of timeslots in each frame   \\ 
 \hline
 $\Phi$ & duration of each timeslot (in seconds) \\
 \hline
 $\Gamma_{k,g,n,t}$ & SINR of user $k\in \CK_{g,n}$ on frame $t$\\ 
 \hline
 \multirow{2}*{$\beta_{g,n,t}^{(kj)}$} & channel coefficient from user $j$'s precoding\\
 ~& vector to user $k$ ($k,j \in \CK_{g,n}$) on frame $t$ \\ 
 \hline
 \multirow{2}*{$d_{k,g,n,t}$} &  transmitted data of user $k\in \CK_{g,n}$ per timeslot \\
 ~& on frame $t$\\
 \hline
 \multirow{2}*{$e_{g,n,t}$} &  communication energy of group $g\in \CG_{n}$ per \\
 ~& timeslot on frame $t$\\
 \hline
 \multirow{2}*{$U_{_F}^{*}$} & UAV's flying velocity that minimizes flying energy \\
 ~&with a predetermined flying path\\
 \hline
 \multirow{2}*{$E_{_F}^*$} & minimal flying energy with a predetermined \\
 ~&flying path\\
 \hline
\end{tabular}
\end{table}

\section{Problem Formulation}\label{sec:2.2}
We denote binary variables $\lambda_{i,g,n,t} \in \{0,1\}$ as the scheduling indicator, where $\lambda_{i,g,n,t}=1$ indicates that user group $g \in \CG_n$ is assigned to timeslot $i$ on frame $t$ and $\lambda_{i,g,n,t}=0$ otherwise.
Another  binary variables $\nu_{n,t} \in \{0,1\}$ indicate that the UAV is hovering above cluster $n$ on frame $t$ ($\nu_{n,t} = 1$), and $\nu_{n,t} = 0$ otherwise.
The UAV energy consumption consists of flying energy $E_{_F}$, hovering energy $E_{_H}$, and communication energy $E_{_C}$.
Since the minimal flying energy $E_{_F}^*$ can be independently obtained by Eq. (\ref{eq:minflyenergy}) without loss of optimality, the objective focuses on joint optimization of $E_{_C}$ and $E_{_H}$, which are expressed by:
\begin{align}
&E_{_C}=\sum_{t=1}^{T_{max}}\sum_{n=1}^{N}\sum_{g=1}^{G_n}\sum_{i=1}^{I}\nu_{n,t}\lambda_{i,g,n,t}e_{g,n,t}, \label{eq:commen}
\end{align}
\begin{align}
E_{_H}=\sum_{t=1}^{T_{max}}\sum_{n=1}^{N} \Phi I P_H \nu_{n,t}.
\end{align}
Note that the UAV is battery limited in practice. 
We focus on the instances that the minimum consumed energy in (\ref{OP:1}) is within the UAV's battery storage, otherwise the task is infeasible.
The optimization problem is formulated as:
\begin{subequations}
\begin{align}
&\mathcal{P}_1:~\min\limits_{\lambda_{i,g,n,t},\atop \nu_{n,t}} ~~ E_{_C}+E_{_H} \label{OP:1} \\
&s.t. \notag\\
& \sum_{t=1}^{_{T_{max}}}\sum_{g=1}^{_{G_n}}\sum_{i=1}^{_{I}}\nu_{n,t}\lambda_{i,g,n,t}d_{k,g,n,t} \geq q_{_{k,n}},~\forall k\in \CK_{n},~n\in \CN, \label{eq:demands} \\
& \nu_{_{n,t}} \leq \nu_{_{n,t\text{+1}}}+\nu_{_{n\text{+1},t\text{+1}}},~\forall n\in \CN,~t\in \CT, \label{eq:order1} \\
& \sum_{g=1}^{G_n}\sum_{i=1}^{I}\lambda_{i,g,n,t} = I\cdot \nu_{_{n,t}},~\forall n\in \CN^{+},~t\in \CT, \label{eq:nucons1} \\
& \sum_{g=1}^{G_n}\lambda_{i,g,n,t} \leq 1,~~\forall i \in \CI,~n\in \CN^{+},~t\in \CT,  \label{eq:varscons1}\\
& \sum_{n=1}^{N+1}\nu_{n,t}  = 1,~\forall t\in \CT,  \label{eq:varscons2} \\
& \lambda_{i,g,n,t} \in \{0,1\} ,~\forall i \in \CI,~g\in \CG_n,~n\in \CN^{+},~t \in \CT, \label{eq:varscons3}\\
& \nu_{n,t} \in \{0,1\},~\forall n\in \CN^{+},~t \in \CT. \label{eq:varscons4}
\end{align}
\end{subequations}
Constraints (\ref{eq:demands}) guarantee that all the users' requests have to be satisfied within $T_{max}$.
Constraints (\ref{eq:order1}) define that the UAV follows a successive and forward manner in visiting clusters.  
For example, if the UAV is hovering above cluster $n$ on frame $t$, in the next frame $t+1$, the UAV either chooses to stay at the current cluster $n$ or move to the next cluster $n+1$.
The option of flying back to previously visited clusters, e.g., $n-1$, is thus excluded. 
Note that the UAV takes off from the first cluster, i.e., $\nu_{_{1,1}}=1$.
Constraints (\ref{eq:nucons1}) represent that all the timeslots on frame $t$ are assigned to a user group when $\nu_{n,t} = 1$, otherwise, no users are scheduled in any timeslot.
Constraints (\ref{eq:varscons1}) and (\ref{eq:varscons2}) indicate that no more than one group can be scheduled at a timeslot and only one cluster can be served within a frame.
Constraints (\ref{eq:varscons3}) and (\ref{eq:varscons4}) confine variables $\lambda_{i,g,n,t}$ and $\nu_{n,t}$ to binary. 

Note that $\mathcal{P}_1$ is a combinatorial optimization problem with a non-convex bilinear objective and constraints.
The optimum can be approached by a well-established relax-and-approximate method. 
That is, the non-convex bilinear terms are relaxed and bounded by McCormick envelop \cite{McCormick}, where each variable ($\lambda_{i,g,n,t}$ and $\nu_{n,t}$) is bounded by an upper and a lower bound.
The relaxation problem becomes an integer linear programming (ILP) problem which can be optimally solved by B\&B. 
Overall, the optimum of $\mathcal{P}_1$ can be approached by ultimately tightening the bounds, e.g., increase the number of breakpoints in the envelopes, but this results in exponentially increasing complexity which is unaffordable in practice \cite{wangbb}.
Thus, we adopt the above relax-and-approximate method to provide an optimal solution for benchmarking small-medium cases.
For general cases, we propose a sub-optimal algorithm in the next section. 

\section{Heuristic Approach}\label{sec:3}
We decompose the joint optimization to two sub-problems, i.e., user-timeslot and hovering time allocation, corresponding to optimization of $\lambda_{i,g,t,n}$ and $\nu_{n,t}$, respectively. 
We then solve one sub-problem when the other is fixed.

\subsection{User-Timeslot Scheduling}
The bilinear items are resolved with the fixed $\nu_{n,t}$.
The number of frames at each cluster are determined by: 
\begin{align}
t_n = \sum_{t=1}^{T_{max}}\nu_{n,t},~\forall n\in \CN,
\end{align}
and $\Phi I t_n$ is the hovering duration.
The user-timeslot scheduling can be carried out independently in each cluster, and the resulting problem for the $n$-th cluster is formulated in $\mathcal{P}_2(n)$ with a given $t_n$.
We denote $E_{_{H,n}}$ and $E_{_{C,n}}$ as the hovering and communication energy for the $n$-th cluster:
\begin{align}
&E_{_{H,n}}= \Phi I P_{_H} t_n,\label{eq:EHn}\\
&E_{_{C,n}}=\sum_{t=\tau_n+1}^{\tau_n+t_n}\sum_{g=1}^{G_n}\sum_{i=1}^{I}\lambda_{i,g,n,t}e_{g,n,t}, \label{eq:ECn}
\end{align}
where $\tau_n$ refers to the number of elapsed frames before the UAV arriving cluster $n$, which can be calculated by:
\begin{align}
\tau_n=\sum_{t=1}^{T_{max}}\sum_{n'=1}^{n-1}\nu_{n',t}.
\end{align}
The sub-problem $\mathcal{P}_2(n)$ is formulated as:
\begin{subequations}
\begin{align}
\label{OP:2}
&\mathcal{P}_2(n):~\min\limits_{\lambda_{i,g,n,t}} ~~ E_{_{C,n}}+E_{_{H,n}} \\
&s.t. \notag\\
& \sum_{t=\tau_n+1}^{\tau_n+t_n}\sum_{g=1}^{G_n}\sum_{i=1}^{I}\lambda_{i,g,n,t}d_{k,g,n,t} \geq q_{k,n},~ \forall k\in \CK_{n}, \label{eq:demands2} \\
& \sum_{g=1}^{G_n}\sum_{i=1}^{I}\lambda_{i,g,n,t} = I,~\forall t\in\{\tau_n+1,...,\tau_n+t_n\}, \label{eq:nucons1_2}\\
& \sum_{g=1}^{G_n}\lambda_{i,g,n,t} \leq 1,~~\forall i \in \CI,~t\in \CT,  \label{eq:varscons1_2}\\
& \lambda_{i,g,n,t} \in \{0,1\} ,~\forall i \in \CI,~g\in \CG_n,~t \in \CT.  \label{eq:varscons3_2}
\end{align}
\end{subequations}
$\mathcal{P}_2(n)$ is a multi-choice multi-dimensional knapsack problem (MMKP), which can be solved by a guided local search (GLS)-based heuristic algorithm with high-quality sub-optimal solutions and pseudo-polynomial-time complexity \cite{MMKP}.

\subsection{Hovering Time Allocation}
To optimize hovering time efficiently, we first investigate the connection between the objective energy and $t_n$.
From Eq. (\ref{eq:EHn}) and Eq. (\ref{eq:ECn}), $E_{_{H,n}}$ increases linearly with $t_n$ while $E_{_{C,n}}$ is determined by both $t_n$ and $\lambda_{i,g,n,t}$.
Next, we show the relationship between the optimum $E_{_{C,n}}$ and $t_n$.
For cluster $n$, we denote $E_{_{C,n}}^{*}(t_n)$ as the communication energy with the optimal scheduling decision $\lambda^{*}_{i,g,n,t}$ at a given hovering time $t_n$. 
\begin{lemma}
\label{le:1} 
$E_{_{C,n}}^{*}(t_n)$ is a non-increasing function of $t_n$,
\begin{align}
E_{_{C,n}}^*(\hat{t}) \geq E_{_{C,n}}^*(\hat{t}+\Delta t),~\hat{t}>0,\Delta t>0.
\end{align}
\end{lemma}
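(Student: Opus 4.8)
The plan is to prove the statement as a standard monotonicity-of-optimum-under-resource-relaxation fact: I would show that \emph{every} scheduling that is feasible for $\mathcal{P}_2(n)$ with hovering time $\hat t$ can be converted into a scheduling that is feasible for $\mathcal{P}_2(n)$ with hovering time $\hat t+\Delta t$ and has the \emph{same} communication energy. Since $E_{_{C,n}}^{*}(\hat t+\Delta t)$ is, by definition, the minimum of $E_{_{C,n}}$ over the (now larger) feasible set, this yields $E_{_{C,n}}^{*}(\hat t+\Delta t)\le E_{_{C,n}}^{*}(\hat t)$. Concretely, with $\tau_n$ held fixed, let $\{\lambda^{*}_{i,g,n,t}\}$ be an optimal solution of $\mathcal{P}_2(n)$ at hovering time $\hat t$, occupying frames $\tau_n+1,\dots,\tau_n+\hat t$ and attaining communication energy $E_{_{C,n}}^{*}(\hat t)$; recall that in $\mathcal{P}_2(n)$ the term $E_{_{H,n}}=\Phi I P_{_H}t_n$ is a constant once $t_n$ is fixed, so ``optimal'' here means minimizing $E_{_{C,n}}$.

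For the conversion I would keep the assignment $\lambda^{*}_{i,g,n,t}$ unchanged on the original frames $\tau_n+1,\dots,\tau_n+\hat t$, and append the $\Delta t$ extra frames $\tau_n+\hat t+1,\dots,\tau_n+\hat t+\Delta t$ configured so as to carry no additional demand-bearing transmission, hence contributing nothing to $E_{_{C,n}}$. Because the demand constraints (\ref{eq:demands2}) are already satisfied by $\lambda^{*}_{i,g,n,t}$ on the first $\hat t$ frames, the extended schedule still satisfies (\ref{eq:demands2}); it satisfies the per-frame and per-timeslot structural constraints (\ref{eq:nucons1_2})--(\ref{eq:varscons3_2}) by construction; and its communication energy equals exactly $E_{_{C,n}}^{*}(\hat t)$. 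Supplying this schedule as a feasible point in the definition of $E_{_{C,n}}^{*}(\hat t+\Delta t)$ then closes the argument.

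The step I expect to be the main obstacle is making the ``extra frames cost nothing'' construction genuinely feasible. Constraint (\ref{eq:nucons1_2}) requires that on every active frame each of the $I$ timeslots be assigned to some group $g\in\CG_n$, and every group has strictly positive per-slot cost $e_{g,n,t}=\Phi\sum_{k\in\CK_{g,n}}\beta^{(kk)}_{g,n,t}p_{k,g,n}>0$, so one cannot naively pad with additional group--timeslots for free. The resolution I would argue is that, since the $\Delta t$ appended frames bear no residual demand, in the hovering-time subproblem their surplus timeslots may be left unused (equivalently, assigned a zero-power/idle pattern) without violating feasibility or the intended operation of the UAV; granting this, the padding incurs no communication energy and the claimed non-increasing behaviour of $E_{_{C,n}}^{*}(t_n)$ follows. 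I would state this point explicitly, since it is exactly where the lemma rests: the monotonicity is driven entirely by the fact that additional hovering time only enlarges the set of timeslots available to meet the fixed demands.
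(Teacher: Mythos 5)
Your proposal is correct and takes essentially the same route as the paper: extend the optimal schedule for hovering time $\hat t$ to a feasible schedule for $\hat t+\Delta t$ with identical communication energy, then invoke the definition of the optimum over the enlarged feasible set. The obstacle you flag---that constraint (\ref{eq:nucons1_2}) forces every timeslot of every active frame to carry some group with strictly positive cost $e_{g,n,t}$, so the appended frames cannot literally be padded for free---is a real gap that the paper's own proof silently steps over by asserting $\lambda^{*}_{i,g,n,t}$ remains feasible at $\hat t+\Delta t$; your resolution (permitting idle/zero-energy slots on the surplus frames) is exactly the reading under which the lemma holds.
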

\begin{proof}
We denote the optimal user scheduling for $\mathcal{P}_2(n)|_{t_n=\hat{t}}$ as $\lambda_{i,g,n,t}^*$.
If $t_n$ increases from $\hat{t}$ to $\hat{t}+\Delta t$, $\lambda_{i,g,n,t}^*$ is still feasible for $\mathcal{P}_2(n)|_{t_n=\hat{t}+\Delta t}$ such that
\begin{align}
\label{eq:lemma1_1}
&E_{_{C,n}}^*(\hat{t}) = \sum_{t=\tau_n+1}^{\tau_n+\hat{t}}\sum_{g=1}^{G_n}\sum_{i=1}^{I}\lambda_{i,g,n,t}^*e_{g,n,t}\notag\\
=&E_{_{C,n}}^{'}(\hat{t}+\Delta t) = \sum_{t=\tau_n+1}^{\tau_n+\hat{t}+\Delta t}\sum_{g=1}^{G_n}\sum_{i=1}^{I}\lambda_{i,g,n,t}^*e_{g,n,t}.
\end{align}
$\lambda_{i,g,n,t}^{*}$ might not be necessarily optimal for $t_n = \hat{t}+\Delta t$.
There exists an optimal scheduling resulting in lower communication energy, i.e.,
\begin{align}
\label{eq:lemma1_2}
&E_{_{C,n}}^*(\hat{t}+\Delta t)\leq E_{_{C,n}}^{'}(\hat{t}+\Delta t)=E_{_{C,n}}^*(\hat{t}).
\end{align} 
Thus the conclusion.
\end{proof}

From Lemma \ref{le:1}, we can observe that $E_{_{C,n}}^*(t_n)$ is an non-increasing function of $t_n$, i.e., $\frac{dE_{_{C,n}}^*(t_n)}{dt_n}\leq 0$.
For $E_{_{H,n}}(t_n)$, we can derive that $\frac{dE_{_{H,n}}(t_n)}{dt_n}=\Phi I P_H$ based on Eq. (\ref{eq:EHn}).
Thus, the extreme point of $E_{_{C,n}}^*(t_n)+E_{_{H,n}}(t_n)$ can be obtained at $t_n = t^{\dagger}$  when
\begin{align}
\label{eq:extreme_point}
\frac{dE_{_{C,n}}^*(t_n)}{dt_n}|_{t_n = t^{\dagger}}=-\Phi I P_{_H}.
\end{align}
Since the existence and the number of extreme points are undetermined.
There are three possible cases, i.e., unimodal, multimodal, and monotonic, for $E_{_{C,n}}^*(t_n)+E_{_{H,n}}(t_n)$, as illustrated in Fig. \ref{fig:lemma2}. 
In case 1, the curve is a unimodal function with only one extreme point.
In case 2, the fluctuation of $\frac{dE_{_{C,n}}^*(t_n)}{dt_n}$ leads to multiple extreme points such that the curve is a multimodal function.
In case 3, Eq. (\ref{eq:extreme_point}) cannot hold, e.g., $\frac{dE_{_{C,n}}^*(t_n)}{dt_n}$ is consistently lager than $-\Phi I P_{_H}$, so the curve is monotonously increasing with no extreme point.
\begin{figure}[h]
\begin{center}
\centering
\vspace{-0.2cm}  
\includegraphics[scale=0.65]{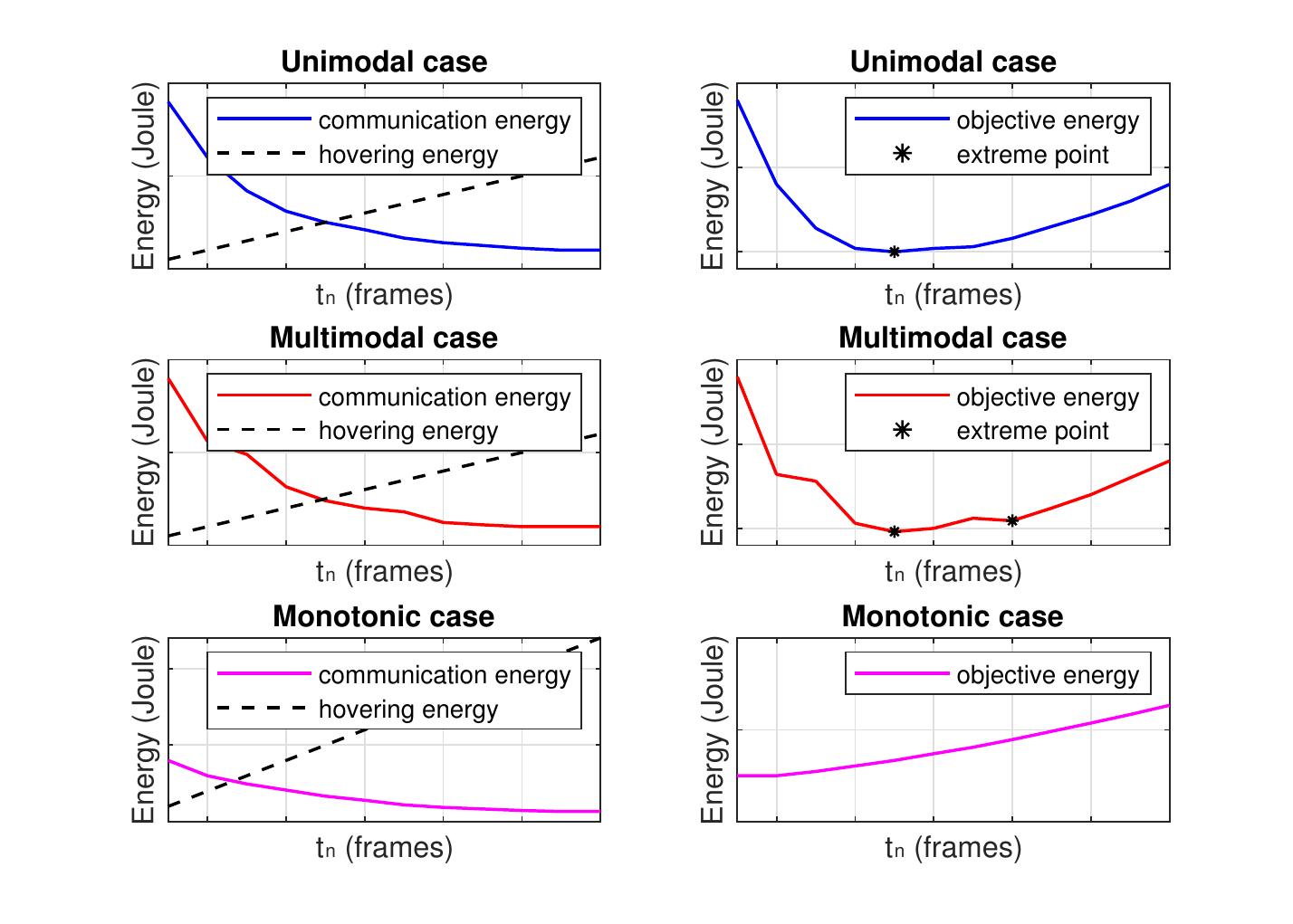}
\setlength{\abovecaptionskip}{-0.4cm}   
\setlength{\belowcaptionskip}{-0.35cm}   
\captionsetup{font={small}}
\caption{Energy curves for three possible cases.}
\label{fig:lemma2}
\end{center}
\end{figure}

Observing the possible cases, we employ an efficient golden section search (GSS) to find the extreme points \cite{guigolden}.
In GSS, we limit the hovering time $t_n\leq \bar{t}_{n}$ to ensure that the total service duration does not exceed $T_{max}$, where $\bar{t}_{n}$ is a maximal time limitation for cluster $n$.
Intuitively, the clusters with more demands need more transmission frames.
We assume $\bar{t}_{n}$ is proportional to the users' demands:
\begin{equation}
\bar{t}_{n} = T_{max}\frac{\sum_{k=1}^{K_n}q_{k,n}}{\sum_{n=1}^{N}\sum_{k=1}^{K_n}q_{k,n}}.
\end{equation} 

\subsection{Algorithm Summary}
We summarize the proposed GSS-based heuristic (GSS-HEU) algorithm in Alg. \ref{alg:Multiple}.
We denote $\CB_{n,t}$ as the set of channel states of cluster $n$ on frame $t$, which is expressed as:
\begin{align}
\CB_{n,t}=\{\beta^{_{(kj)}}_{_{1,n,t}},...,\beta^{_{(kj)}}_{_{G_n,n,t}}|~\forall k,j \in \CK_{g,n}\}.
\end{align}
In GSS-HEU, the initial search range of GSS $[x_1,y_1]$ is set as $[0, \bar{t}_n]$, which is partitioned into 3 sections by two points $u_1$ and $v_1$ with the golden ratio 0.618 in lines 2-4, where $\lceil\centerdot\rceil$ is an operation to round a value up to an integer.
When a hovering time is searched in GSS, e.g., $t_n=u_m$ or  $t_n=v_m$, the corresponding user-timeslot allocation is obtained by solving $\mathcal{P}_2(n)$ in line 6.
In lines 9-13, we compare the objective energy and update the search range.
The search process terminates at $|y_{m}-x_{m}|\leq 1$.
The selected hovering time $t_n^*$ is $v_m$ and the corresponding scheduling scheme $\lambda_{{i,g,n,t}}^*$ is $\lambda_{i,g,n,t}|_{t_n=v_m}$.

\begin{algorithm}[htb]
  \caption{GSS-HEU Algorithm}
  \label{alg:Multiple}
  \begin{algorithmic}[1]
  	\REQUIRE ~~\\
  	Users' demands: $q_{_{1,1}}$,$...$, $q_{_{K_1,1}}$,$...$, $q_{_{1,N}}$,$...$,$q_{_{K_N,N}}$;\\
	Channel states: $\CB_{^{1,1}}$,$...$,$\CB_{^{1,T_{max}}}$,$...$,$\CB_{^{N,1}}$,$...$,$\CB_{^{N,T_{max}}}$;\\
	Search range's upper bound: $\bar{t}_{_1},...,\bar{t}_{_N}$.
  	\ENSURE ~~\\ 
  	Heuristic solution: $\lambda_{^{1,1,1,1}}^*,...,\lambda_{^{I,G_n,N,T_{max}}}^*,t_{^1}^{*},...,t_{^N}^*$
  	\FOR {$n=1$; $n\leq N$; $n++$}
  	\STATE $x_1 = 0$; $y_1 = \bar{t}_{n}$;
  	\STATE $u_1 = \lceil y_1-0.618(y_1-x_1)\rceil$;\\
  	\STATE $v_1 = \lceil x_1+0.618(y_1-x_1) \rceil$;
			\FOR {$m=1$; $|y_m-x_m|> 1$; $m++$}
			\STATE Solve $\mathcal{P}_2(n)|_{t_n=u_m}$ and $\mathcal{P}_2(n)|_{t_n=v_m}$;
			\STATE Obtain the corresponding user scheduling schemes $\lambda_{i,g,n,t}|_{t_n=u_m}$ and $\lambda_{i,g,n,t}|_{t_n=v_m}$;
			\STATE Obtain the objective energy $(E_{_{C,n}}+E_{_{H,n}})|_{t_n=u_m}$ and $(E_{_{C,n}}+E_{_{H,n}})|_{t_n=v_m}$;
			\IF {$(E_{_{C,n}}+E_{_{H,n}})|_{t_n=u_m}<(E_{_{C,n}}+E_{_{H,n}})|_{t_n=v_m}$}
				\STATE $x_{m+1}=x_m;~y_{m+1}=v_m;~v_{m+1}=u_m$;\\
				$u_{m+1} =\lceil y_{m+1}-0.618(y_{m+1}-x_{m+1})\rceil$;
			\ELSE
			    \STATE $x_{m+1}=u_m;~y_{m+1}=y_m;~u_{m+1}=v_m$;\\
			    $v_{m+1} =\lceil y_{m+1}-0.618(y_{m+1}-x_{m+1})\rceil$;
			\ENDIF
        	\ENDFOR
			\STATE $t_n^*=v_m$; $\lambda_{{i,g,n,t}}^*=\lambda_{i,g,n,t}|_{t_n=v_m}$.
  	\ENDFOR
     \end{algorithmic}
\end{algorithm}

The complexity of GSS-HEU is $\mathcal{O}(\sum_{n=1}^N G_{n}^2\times \text{max}\{K_{n},I\bar{t}_{n}\}+\text{log}(2\bar{t}_{n}))$, which is much lower than that of the optimal method.
However, both the optimal and GSS-HEU approaches may have limitations in fast decision-making.
The computational time for both algorithms grows exponentially with the number of users since $G_n = 2^{K_n}-1$ \cite{leischeduling}.
In addition, both algorithms need the estimated and complete channel states for the whole task frames, i.e., from $t=1$ to $T_{max}$. 
This may result in difficulties in channel estimation.
Therefore, we reconsider $\mathcal{P}_1$ from the perspective of DRL to enable the UAV to make decisions intelligently, while the developed optimal and sub-optimal algorithms are used to benchmark the performance of learning-based solutions.

\section{Actor-Critic-Based DRL algorithm}\label{sec:4}
\subsection{Overview of Actor-Cirtic-Based DRL (AC-DRL)}
In DRL, an agent learns to make decisions by exploring the unknown environments and exploiting the received feedbacks.
At each learning step\footnote{In this paper, a learning step is equivalent to a transmission frame.} $t$, 
the agent observes the current state $\boldsymbol{s}_t$ and takes an action $\boldsymbol{a}_t$ based on a policy.
Then, a reward $r_t$ will be fed back to the agent.
The policy will be updated step by step according to the feedback.
Actor-critic is an emerging reinforcement learning method that separates the agent into two parts, an actor and a critic.
The actor is responsible for taking actions following a stochastic policy $\pi(\boldsymbol{a}_t|\boldsymbol{s}_t)$, where $\pi(\centerdot|\centerdot)$ refers to a conditional probability density function.
The critic is used to evaluate the decisions via a Q-value, which is given by:
\begin{align}
Q^{\pi}(\boldsymbol{s}_t,\boldsymbol{a}_t)=\mathbb{E}_{\boldsymbol{a}_t\sim\pi(\boldsymbol{a}_t|\boldsymbol{s}_t)}[R_t|\boldsymbol{s}_t, \boldsymbol{a}_t],
\end{align}
where $\mathbb{E}_{\boldsymbol{a}_t\sim\pi(\boldsymbol{a}_t|\boldsymbol{s}_t)}[\centerdot|\centerdot]$ is a conditional expectation under the policy $\pi(\boldsymbol{a}_t|\boldsymbol{s}_t)$, and $R_t$ is the cumulative discounted reward with a discount factor $\gamma$, which can be expressed as:
\begin{align}
R_t = \sum_{t'=t}^{\infty}\gamma^{t'-t}r_{t'},~~\gamma\in [0,1].
\end{align}
However, obtaining the explicit expressions of $\pi(\boldsymbol{a}_t|\boldsymbol{s}_t)$ and $Q^{\pi}(\boldsymbol{s}_t,\boldsymbol{a}_t)$ is difficult.
DRL uses DNNs as the parameterized approximators to provide estimations for $\pi(\boldsymbol{a}_t|\boldsymbol{s}_t)$ and $Q^{\pi}(\boldsymbol{s}_t,\boldsymbol{a}_t)$.
We denote $\boldsymbol{\theta}_t$ and $\boldsymbol{\omega}_t$ as the parameter vectors for the actor and critic, and $\pi(\boldsymbol{a}_t|\boldsymbol{s}_t;\boldsymbol{\theta}_t)$ and $Q^{\boldsymbol{\theta}}(\boldsymbol{s}_t,\boldsymbol{a}_t;\boldsymbol{\omega}_t)$ as the corresponding parameterized functions\footnote{ For simplicity, $Q^{\boldsymbol{\theta}}(\boldsymbol{s}_t,\boldsymbol{a}_t;\boldsymbol{\omega}_t)=\mathbb{E}_{\boldsymbol{a}_t\sim\pi(\boldsymbol{a}_t|\boldsymbol{s}_t;\boldsymbol{\theta}_t)}[R_t|\boldsymbol{s}_t, \boldsymbol{a}_t]$.}.
The goal of the agent is to minimize the loss function of the actor $-J(\boldsymbol{\theta}_t)$:
\begin{align}
\label{eq:loss_actor}
-J(\boldsymbol{\theta}_t) = -\mathbb{E}[Q^{\boldsymbol{\theta}}(\boldsymbol{s}_t,\boldsymbol{a}_t;\boldsymbol{\omega}_t)].
\end{align}  
Based on the fundamental results of the policy gradient theorem \cite{IntroRL}, the gradient of $J(\boldsymbol{\theta}_t)$ can be calculated by:
\begin{align}
\label{eq:policy_gradient}
\nabla_{\boldsymbol{\theta}}J(\boldsymbol{\theta}_t) = \mathbb{E}[\nabla_{\boldsymbol{\theta}}\log\pi(\boldsymbol{a}_t|\boldsymbol{s}_t;\boldsymbol{\theta}_t)Q^{\boldsymbol{\theta}}(\boldsymbol{s}_t,\boldsymbol{a}_t;\boldsymbol{\omega}_t)].
\end{align}
The update rule of $\boldsymbol{\theta}_t$ can be derived based on gradient descent:
\begin{align}
\label{eq:update_actor}
\boldsymbol{\theta}_{t+1} =  \boldsymbol{\theta}_{t} - \alpha_a\cdot(-\nabla_{\boldsymbol{\theta}} J(\boldsymbol{\theta}_{t})),
\end{align}
where $\alpha_a$ is the learning rate of the actor.
For the critic, the parameter vector $\boldsymbol{\omega}_t$ is updated based on temporal-difference (TD) learning \cite{IntroRL}.
In TD learning, the loss function of the critic $C_{_Q}(\boldsymbol{\omega}_t)$ is defined as the expectation of the square of TD error $\delta_{_Q}(\boldsymbol{\omega}_t)$, i.e., $\mathbb{E}[(\delta_{_Q}(\boldsymbol{\omega}_t))^2]$.
The TD error $\delta_{_Q}(\boldsymbol{\omega}_t)$ refers to the difference between the TD target and estimated Q-value, which is given by:
\begin{align}
\label{eq:TD_Q}
\delta_{_Q}(\boldsymbol{\omega}_t) = r_t+\gamma Q^{\boldsymbol{\theta}}(\boldsymbol{s}_{t+1},\boldsymbol{a}_{t+1};\boldsymbol{\omega}_t)-Q^{\boldsymbol{\theta}}(\boldsymbol{s}_t,\boldsymbol{a}_t;\boldsymbol{\omega}_t),
\end{align}
where $r_t+\gamma Q^{\boldsymbol{\theta}}(\boldsymbol{s}_{t+1},\boldsymbol{a}_{t+1};\boldsymbol{\omega}_t)$ is the TD target.
The objective of the critic is to minimize the loss function $C_{_Q}(\boldsymbol{\omega}_t)$ and the updated rule of $\boldsymbol{\omega}_t$ can be derived by gradient descent:
\begin{align}
\label{eq:updatew_critic}
\boldsymbol{\omega}_{t+1} = \boldsymbol{\omega}_t - \alpha_c\nabla_{\boldsymbol{\omega}} C_{_Q}(\boldsymbol{\omega}_t),
\end{align}
where $\alpha_c$ is the learning rate for the critic.

However, approximating $Q^\pi(\boldsymbol{s}_t,\boldsymbol{a}_t)$ brings about a large variance for the gradient $\nabla_{\boldsymbol{\theta}}J(\boldsymbol{\theta}_t)$, resulting in poor convergence \cite{advantagefunc}.
To solve the problem, a V-value is introduced:
\begin{align}
V^\pi(\boldsymbol{s}_t) =\mathbb{E}_{\boldsymbol{a}_t\sim\pi(\boldsymbol{a}_t|\boldsymbol{s}_t)}[R_t|\boldsymbol{s}_t].
\end{align}
Approximating $V^\pi(\boldsymbol{s}_t)$ can reduce the variance.
With the parametered V-value $V^{\boldsymbol{\theta}}(\boldsymbol{s}_t; \boldsymbol{\omega}_t)$, the TD error and the loss function of the critic are expressed as:
\begin{align}
\label{eq:TD_V}
\delta_{_V}(\boldsymbol{\omega}_t) = r_t+\gamma V^{\boldsymbol{\theta}}(\boldsymbol{s}_{t+1};\boldsymbol{\omega}_t)-V^{\boldsymbol{\theta}}(\boldsymbol{s}_t;\boldsymbol{\omega}_t),
\end{align}
and 
\begin{align}
\label{eq:loss_TD_V}
C_{_V}(\boldsymbol{\omega}_t) = \mathbb{E}[(\delta_{_V}(\boldsymbol{\omega}_t))^2].
\end{align}
In addition, $\delta_{_V}(\boldsymbol{\omega}_t)$ provides an unbiased estimation of Q-value \cite{advantagefunc}. 
Thus, we can rewrite $\nabla_{\boldsymbol{\theta}} J(\boldsymbol{\theta}_t)$ in Eq. (\ref{eq:policy_gradient}) as:
\begin{align}
\nabla_{\boldsymbol{\theta}} J(\boldsymbol{\theta}_t) =& \mathbb{E}\left[\nabla_{\boldsymbol{\theta}}\log(\pi(\boldsymbol{a}_t|\boldsymbol{s}_t;\boldsymbol{\theta}_t))Q^{\pi}(\boldsymbol{s}_t,\boldsymbol{a}_t)\right]\notag\\
=&\mathbb{E}\left[\nabla_{\boldsymbol{\theta}}\log(\pi(\boldsymbol{a}_t|\boldsymbol{s}_t;\boldsymbol{\theta}_t))\delta_{_V}(\boldsymbol{\omega}_t)\right].
\end{align}

\subsection{Problem Reformulation}
To apply AC-DRL, we reformulate $\mathcal{P}_1$ to an MDP problem, in which the UAV acts as an agent.
We define the states, actions, and rewards as follows.

\subsubsection{States}
The system states $\boldsymbol{s}_t$ consist of the channel states for all
the clusters on the current frame, i.e., $\CB_{1,t},...,\CB_{N,t}$, the undelivered demands, and the currently served cluster on frame $t$. 
The undelivered demands $b_{n,t}$ is the residual data to be delivered for cluster $n$ on frame $t$:
\begin{align}
\label{eq:bt}
&b_{n,t+1} = b_{n,t} - d^{\pi}_{n,t},~\forall n\in\CN,~t\in\CT,\\
&b_{n,0} = \sum_{k=1}^{K_n}q_{k,n},~\forall n\in\CN,
\end{align}
where $d^{\pi}_{n,t}$ is the delivered data for cluster $n$ in frame $t$ under the policy $\pi(\boldsymbol{s}_t|\boldsymbol{a}_t)$.
We denote $o_t \in \CN^{+}$ as an indicator to represent which cluster the UAV is serving in frame $t$.
When the users’ requests in the current cluster are completed, the UAV will move to the next cluster in the next frame, otherwise, staying at the current cluster.
For example, we assume that the UAV is hovering above cluster $n$ on frame $t$, i.e., $o_{t}=n$.
For the next frame, $o_{t+1}$ is obtained by: 
\begin{align}
\label{eq:ot}
o_{t+1} =\left\lbrace
\begin{array}{ll}
n, & b_{n,t} > 0,\\
n+1, & b_{n,t} = 0.\\
\end{array}
\right.
\end{align}
When the UAV's duration exceeds $T_{max}$, the UAV will fly back to the dock station.
By assembling the above three parts, the state $\boldsymbol{s}_t$ is defined as:
\begin{align}
\boldsymbol{s}_t=[\CB_{_{1,t}},...,\CB_{_{N,t}},b_{_{1,t}}...,b_{_{N,t}}, o_t].
\end{align}
Note that the elements of $\CB_{n,t}$ are modeled as FSMC.
In addition, based on Eq. (\ref{eq:bt}) and Eq. (\ref{eq:ot}), the next state of $b_{n,t}$ and $o_{t}$ only depend on the current state and current policy.
Therefore, the transition of the state $\boldsymbol{s}_t$ conforms to MDP \cite{IntroRL}.
\subsubsection{Actions}	
The action of the UAV is the user-timeslot assignment on frame t, which is given by:
\begin{align}
&\boldsymbol{a}_t=[a_{_{1,t}},...,a_{_{I,t}}],\notag\\
&a_{_{i,t}} \in \{1,...,g,...,G_n\},~\forall i \in \CI,~t \in \CT,
\end{align}
where $a_{i,t}=g$ means the $g$-th group is selected at the $i$-th timeslot on the $t$-th frame.
Note that the action space $G_n$ can be huge since it increases exponentially with the number of users.

\subsubsection{Rewards}\label{sss:reward_design}
The reward functions are commonly related to the objective of the problem. Conventionally, the reward function of $\mathcal{P}_1$ can be designed by Eq. (\ref{eq:reward1}) and Eq. (\ref{eq:reward2}), referring to \cite{Wanguavconstraint} and \cite{Luddpg}:
\begin{equation}
\label{eq:reward1}         
      r_t=1/e^{\pi}_t,
\end{equation} 
\begin{equation}
\label{eq:reward2}         
      r_t=-e^{\pi}_t,
\end{equation} 
where $e^{\pi}_t$ is the energy consumed on frame $t$ under the policy $\pi(\boldsymbol{s}_t|\boldsymbol{a}_t)$. 
Since both the above reward functions monotonically decrease with $e^{\pi}_t$, the UAV updates the policy towards reducing energy consumption.

\subsection{The AC-DSOS algorithm}\label{sec:5}
Conventional AC-DRL algorithms may not be able to deal with constrained discrete problems.
Firstly, the combinatorial component of $\mathcal{P}_1$ limits the conventional AC-DRL in addressing huge discrete action spaces \cite{drlhuge}.
Secondly, the increased action space reduces the exploration efficiency in the learning process and degrades overall energy-saving performance.
Thirdly, the conventional AC-DRL algorithms cannot guarantee the solution's feasibility in general.
This means that a high-reward action can fail to satisfy the constraints in $\mathcal{P}_1$.
To overcome the above difficulties and limitations, we propose an AC-DSOS algorithm that is tailored for constrained problems with discrete action representation.
The basic actor-critic framework is employed in order to take the advantages of the stochastic policy and TD learning, where the stochastic policy can be quantified to tackle the issue of huge discrete spaces and TD learning can improve the learning efficiency.

\begin{figure*}
\begin{center}
\centering
\vspace{-0.2cm}  
\includegraphics[scale=0.6]{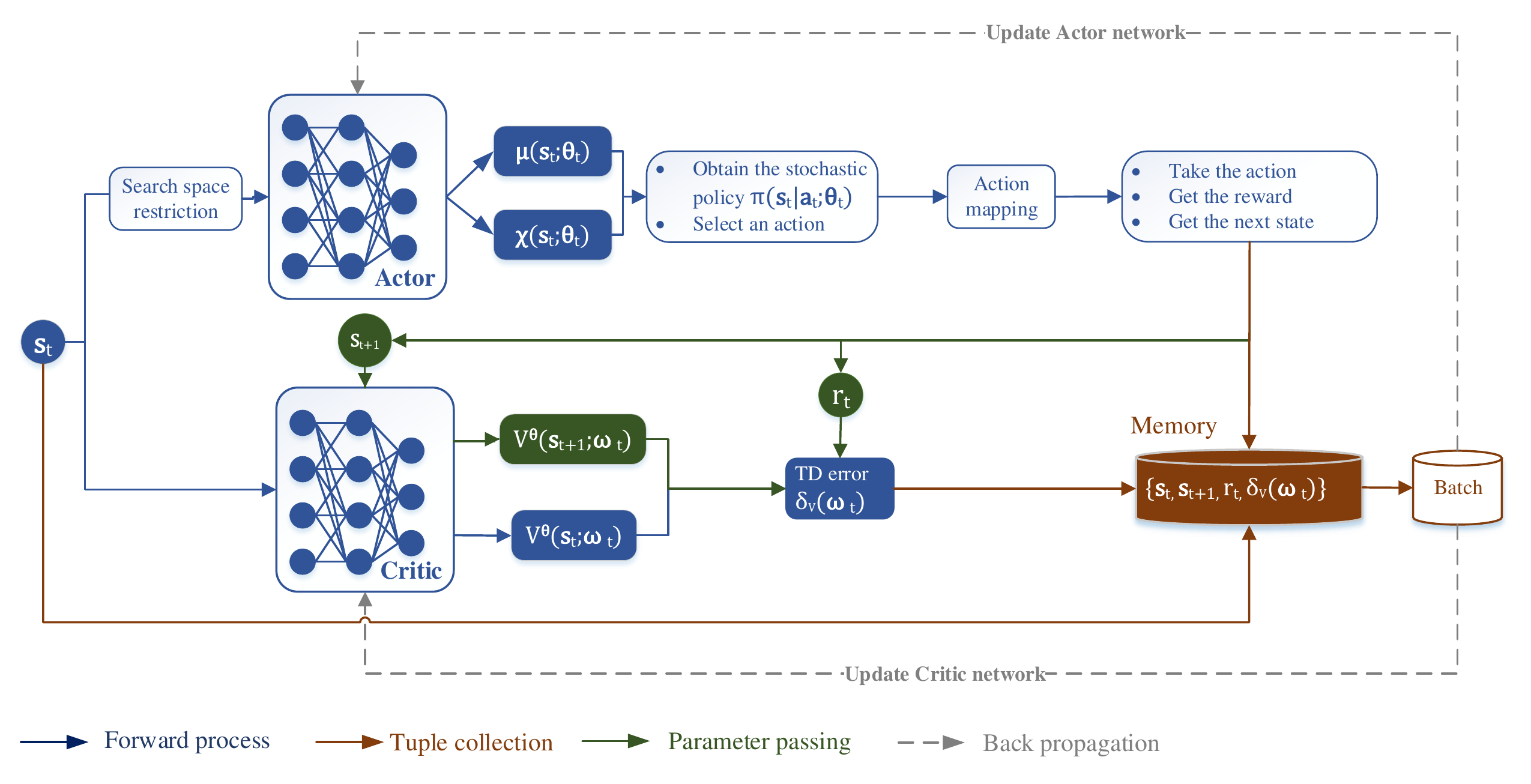}
\setlength{\belowcaptionskip}{-0.2cm}   
\captionsetup{font={small}}
\caption{The actor-critic framework of AC-DSOS.}
\label{fig:structure}
\end{center}
\end{figure*}	

We illustrate the actor-critic framework of AC-DSOS in Fig. \ref{fig:structure}, where two DNNs work as the actor and critic, respectively.
The stochastic policy $\pi(\boldsymbol{a}_t|\boldsymbol{s}_t)$ is usually modeled as Gaussian distribution with a mean $\boldsymbol{\mu}(\boldsymbol{s}_t)$ and a variance $\boldsymbol{\chi}(\boldsymbol{s}_t)$ \cite{weiuserschedule}.
Given the current state $\boldsymbol{s}_t$, the actor does not predict $\pi(\boldsymbol{a}_t|\boldsymbol{s}_t; \boldsymbol{\theta}_t)$ directly but obtains approximations of the mean $\boldsymbol{\mu}(\boldsymbol{s}_t;\boldsymbol{\theta}_t)$ and the variance $\boldsymbol{\chi}(\boldsymbol{s}_t;\boldsymbol{\theta}_t)$.
An action $\boldsymbol{a}_t$ can be selected based on $\pi(\boldsymbol{a}_t|\boldsymbol{s}_t; \boldsymbol{\theta}_t)$.
Then, the agent receives a reward $r_t$ after taking the action and collects the next state $\boldsymbol{s}_{t+1}$.
For the critic, two V-values, $V^{\boldsymbol{\theta}}(\boldsymbol{s}_t;\boldsymbol{\omega}_t)$ and $V^{\boldsymbol{\theta}}(\boldsymbol{s}_{s+1};\boldsymbol{\omega}_t)$, are estimated by DNN with the inputs $\boldsymbol{s}_t$ and $\boldsymbol{s}_{t+1}$, respectively.   
The TD error $\delta_{_V}(\boldsymbol{\omega}_t)$ can be calculated by Eq. (\ref{eq:TD_V}).
A tuple $\{\boldsymbol{s}_t,\boldsymbol{s}_{t+1},\delta_{_V}(\boldsymbol{\omega}_t),r_t\}$ is stored in a memory at each step $t$.
By applying a memory replay mechanism, the data in the memory can be used for training the DNNs.
In each training step, the actor and critic are updated by the gradient descent over a batch of training data.
The whole training process consists of multiple episodes, each episode including $T_{max}$ steps.
Based on the above framework, the AC-DSOS algorithm is summarized in Alg. \ref{alg:AC}.
The novelties of the proposed AC-DSOS compared to the conventional AC-DRL are summarized as follows.
\subsubsection{Action Mapping to Tackle the Issue of Huge Discrete Action Space}
The conventional actor-critic is used for continuous action space.
We denote $\hat{\boldsymbol{a}}_{t}=[\hat{a}_{1,t},...,\hat{a}_{I,t}]$ as the original action selected by the stochastic policy, where the element $\hat{a}_{i,t}$ is fractional.
However, as the decision variables are integers in $\mathcal{P}_1$, the action space is discrete.
To deal with this issue, we adopt an action mapping method in AC-DSOS (line 9 in Alg. \ref{alg:AC}).
Firstly, we confine $\hat{a}_{i,t}$ to a fixed range $[-\kappa,\kappa]$ to avoid its value being too large/small since the domain of Gaussian distribution is $[-\infty, \infty]$.
Then, a uniform quantization method is used to map $\hat{a}_{i,t}$ to the discrete action space $\{1,...,G_n\}$ by:
\begin{align}
\label{Eq:quantify} 
      a_{i,t} =  \lceil \frac{\kappa + \hat{a}_{i,t}}{2\kappa/{G_n}}\rceil,
\end{align} 
where $2\kappa/G_n$ is the quantization interval.
With the mapping operation, we can support a larger $G_n$ by reducing the interval.

\begin{algorithm}[h]
  \caption{AC-DSOS Algorithm}
  \label{alg:AC}
  \begin{algorithmic}[1]
  \REQUIRE The current state $\boldsymbol{s}_t$.
  \ENSURE The current action $\boldsymbol{a}_t$.
   \STATE Initialize $\boldsymbol{\theta}_1$ and $\boldsymbol{\omega}_1$.
	\FOR {each learning episode}
		\STATE Observe the initial state $\boldsymbol{s}_1$.
	\FOR {$t=1:T_{max}$}
			\STATE Remove the groups containing the demand-satisfied users.
			\STATE Predicted mean $\boldsymbol{\mu}(\boldsymbol{s}_t;\boldsymbol{\theta}_t)$ and variance $\boldsymbol{\chi}(\boldsymbol{s}_t;\boldsymbol{\theta}_t)$ by the DNN of the actor.	
			\STATE Obtain action's distribution $\pi(\boldsymbol{a}_t|\boldsymbol{s}_t;\boldsymbol{\theta}_t)$ based on Gaussian distribution.
			\STATE Randomly choose $\hat{\boldsymbol{a}}_t$ following $\pi(\boldsymbol{a}_t|\boldsymbol{s}_t;\boldsymbol{\theta}_t)$.
			\STATE Map the elements $\hat{a}_{i,t}$ to $a_{i,t}$ by Eq. (\ref{Eq:quantify}).
			\STATE Take the after-mapped action $\boldsymbol{a}_t$.
			\STATE Obtain reward $r_t$ by Eq. (\ref{eq:rereward1}).
			\STATE Collect the next state $\boldsymbol{s}_{t+1}$.
			\STATE Approximate the value functions $V^{\boldsymbol{\theta}}(\boldsymbol{s}_t;\boldsymbol{\omega}_t)$ and $V^{\boldsymbol{\theta}}(\boldsymbol{s}_{t+1};\boldsymbol{\omega}_t)$ by the DNN of the critic.
			\STATE Calculate TD error $\delta_{_V}(\boldsymbol{\omega}_t)$ by Eq. (\ref{eq:TD_V}).
			\STATE Form and store a new tuple $\{\boldsymbol{s}_t,\boldsymbol{s}_{t+1},r_t,\delta_{_V}(\boldsymbol{\omega}_t)\}$.
			\STATE Obtain $\boldsymbol{\theta}_{t+1}$ and $\boldsymbol{\omega}_{t+1}$ by gradient descent.
			\STATE $\boldsymbol{s}_{t}=\boldsymbol{s}_{t+1}$; $\boldsymbol{\theta}_{t}=\boldsymbol{\theta}_{t+1}$; $\boldsymbol{\omega}_{t}=\boldsymbol{\omega}_{t+1}$.
	\ENDFOR
	\ENDFOR
  \end{algorithmic}
\end{algorithm}

\subsubsection{Action Space Restriction to Improve Solution Quality}
Although AC-DSOS can tackle the issue of discrete action space by the above mapping operation, exploring in a huge space remains difficult.
To improve the exploration efficiency and the quality of the solution, we design a method to restrict the action space in the learning process (line 5 in Alg. \ref{alg:AC}).
At the beginning of each frame, we first observe which users' demands have been satisfied.
Then, we remove the corresponding candidate groups, i.e., the groups containing the successfully served users.
Therefore, the size of the action space is not fixed over $T_{max}$ but gradually decreases.
The action space restriction can help the agent to avoid redundant searches for demand-satisfied users.
Besides, searching in a smaller action space speeds up the algorithm to converge, thereby improving search efficiency and quality.

\subsubsection{Re-designed Reward Function to Deal with Feasibility Issues}
Without a carefully designed mechanism, the actions made in conventional AC-DRL may easily violate constraints, thus fail to guarantee the solution feasibility. 
In $\mathcal{P}_1$, the major difficulty comes from constraints (\ref{eq:demands}), whereas (\ref{eq:order1})-(\ref{eq:varscons4}) can be satisfied by properly defined actions.
Under the commonly-used reward designs, e.g., Eq. (\ref{eq:reward1}) or Eq. (\ref{eq:reward2}), constraint (\ref{eq:demands}) may not be satisfied since the criterion of the decision making is to minimize the objective energy without considering constraints.
To solve the problem, we re-design the reward function by incorporating constraint (\ref{eq:demands}), which is given by:
\begin{align}
\label{eq:rereward1}
r_t = \frac{\sum_{n=1}^{N}{d}^{\pi}_{n,t}}{(e^{\pi}_t)^\epsilon}.
\end{align}
The rationale is that the proposed reward function is the ratio between the delivered data and the consumed energy on frame $t$, where $\epsilon$ is a control parameter.
When $\epsilon$ is small, the reward enforces the UAV to deliver more data to meet users' demands.
However, transmitting more data results in more energy consumption.
To control energy growth, we can increase $\epsilon$ such that the agent will reduce the energy consumption to avoid the reward losses.
Thus, by tuning an appropriate $\epsilon$, the decisions made by AC-DSOS can achieve good energy-saving performance while satisfying users' demands.

\section{Numerical Results}\label{sec:5}
In this section, we present numerical results to evaluate the performance of the proposed AC-DSOS algorithm and compare it with other schemes:
\begin{itemize}
\item
Previous AC-DRL scheme: Deep deterministic policy gradient (DDPG) \cite{davidDDPG};
\item
High-complexity near-optimal scheme: the proposed GSS-HEU in Alg. \ref{alg:Multiple};
\item
Low-complexity sub-optimal scheme: semi-orthogonal user scheduling-based heuristic algorithm (SUS-HEU) \cite{Yoosus};
\item
Optimal scheme: relax-and-approximate approach.
\end{itemize}
DDPG provides performance benchmarks from a typical actor-critic perspective, where a deterministic policy is applied without action space restriction.
The structure of the DNNs, parameter settings, and reward function Eq. (\ref{eq:rereward1}) for AC-DSOS and DDPG are the same in order to enable a feasible solution from DDPG.
The proposed sub-optimal GSS-HEU and optimal algorithms, and sub-optimal SUS-HEU in \cite{Yoosus} benchmark AC-DSOS from an optimization aspect, where SUS-HEU adopts a simple user-grouping strategy with lower complexity than GSS-HEU. 

In the simulation, we first evaluate the performance of energy consumption and computational time.
After that, we justify the developed new reward function in guaranteeing solution feasibility by comparing several well-known reward functions.
Furthermore, we evaluate the convergence performance of AC-DSOS with different learning rates.

\subsection{Parameter Settings}
The UAV is equipped with $L=10$ antennas serving $N=3$ clusters.
The ground users are randomly scattered in the service area.
Each cluster contains up to $K=9$ users.
The users' demands $q_{k,n}$ are randomly selected from $\{1, 2, 3, 4, 5\}$ (Mbit).
We assume the bandwidth $B=10$ MHz, noise power $\sigma^2=0.1$ mW, hovering power $P_{H}=10$ W, and transmit power $p_{k,g,n}=3$ W, referring to \cite{zengenergyuav}.
Based on FSMC, we quantize $\beta^{_{(kk)}}_{^{g,n,t}}$ and $\beta^{_{(kj)}}_{^{g,n,t}}$ into 9 levels, $\{0,0.3,0.6,0.9,1.2,1.5,1.8,2.1,2.4\}$.
The setting of the transfer probability matrix is similar in \cite{Hedrl}. 
Two fully-connected DNNs are employed as the actor and the critic.
The adopted parameters for implementing AC-DSOS are summarized in Table \ref{tab:DNN}.
\begin{table}[h]
\small
\vspace{-0mm}
\centering
\caption{Parameters in AC-DSOS}
\label{tab:DNN}
\begin{tabular}{|c||c|c|}
\hline
Parameters & Actor & Critic \\
\hline
Number of hidden layers & 3 & 3\\
\hline
Number of nodes/layer & 300 & 300 \\
\hline
Activation function (hidden layers) & ReLU & ReLU \\
\hline
Activation function (output layer) &  Sigmoid & None \\
\hline
Learning rate & 0.003 & 0.002\\
\hline
Loss function & Eq. (\ref{eq:loss_actor}) & Eq. (\ref{eq:loss_TD_V})\\
\hline
Optimizer & Adam & Adam\\
\hline
Batch size & 64 & 64\\
\hline
Discount factor $\gamma$ & \multicolumn{2}{c|}{0.9}\\
\hline
Memory size & \multicolumn{2}{c|}{10,000 tuples}\\
\hline
Number of learning episodes & \multicolumn{2}{c|}{400}\\
\hline
Value range $[-\kappa, \kappa]$  of $\hat{a}_{i,t}$ & \multicolumn{2}{c|}{[-2, 2]}\\
\hline
\multirow{2}*{Software platform} &  \multicolumn{2}{c|}{Python 3.6 with}\\
& \multicolumn{2}{c|}{TensorFlow 0.12.1}\\
\hline
\end{tabular}
\end{table}

\subsection{Results and Analysis}
Firstly, by comparing with four benchmarking algorithms in Fig. \ref{fig:energy_alg}, the proposed AC-DSOS achieves a good trade-off between energy minimization and computational time.
Note that for $K > 7$, the optimal energy results are absent due to the high complexity and the corresponding long computational time.
From Fig. \ref{fig:energy_alg}, AC-DSOS saves around 29.94\% energy compared to DDPG in average.
Overall, AC-DSOS provides a sub-optimal solution, with 19.17\% gap to the optimum.
GSS-HEU achieves near optimality, and consumes less 9.8\% energy than AC-DSOS in average but with paying much higher complexity and time, e.g., see Fig. \ref{fig:comp_time}.
SUS-HEU consumes the highest energy since it schedules users based on channel conditions without considering energy consumption.
It is also shown that the total objective energy follows a roughly linear increase in all the algorithms.
The gaps between the optimal algorithm and other algorithms become larger as $K$ increases.
When $K$ grows from 5 to 7, the gap to the optimum increases from 47.7\% to 65.1\% for SUS-HEU, and from 31\% to 44.5\% for DDPG.
In AC-DSOS, since the delivery-completed users are deleted during the learning process, the size of the action space will continuously decrease.
This improves the searching efficiency and quality, and reduces the growth rate of the gap as $K$ increases, from 11.1\% ($K=5$) to 16.7\% ($K=7$).

\begin{figure}[h]
\begin{center}
\centering
\vspace{-0.2cm}  
\includegraphics[scale=0.53]{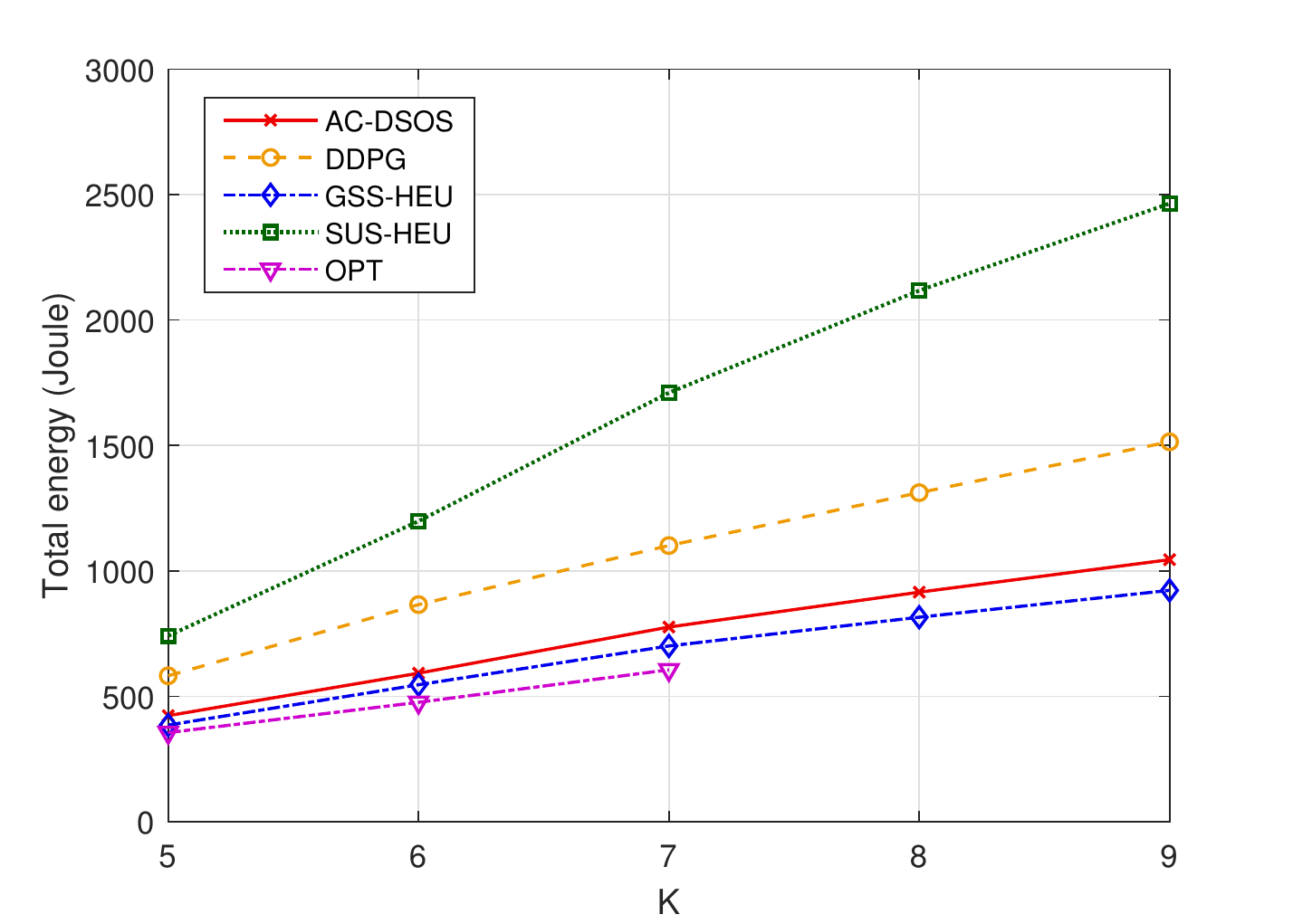}
\captionsetup{font={small}}
\caption{Total energy vs. $K$ ($T_{max}=160$).}
\setlength{\belowcaptionskip}{-0.3cm}   
\label{fig:energy_alg}
\end{center}
\end{figure}

\begin{figure}[h]
\begin{center}
\centering
\vspace{-0.5cm}  
\includegraphics[scale=0.53]{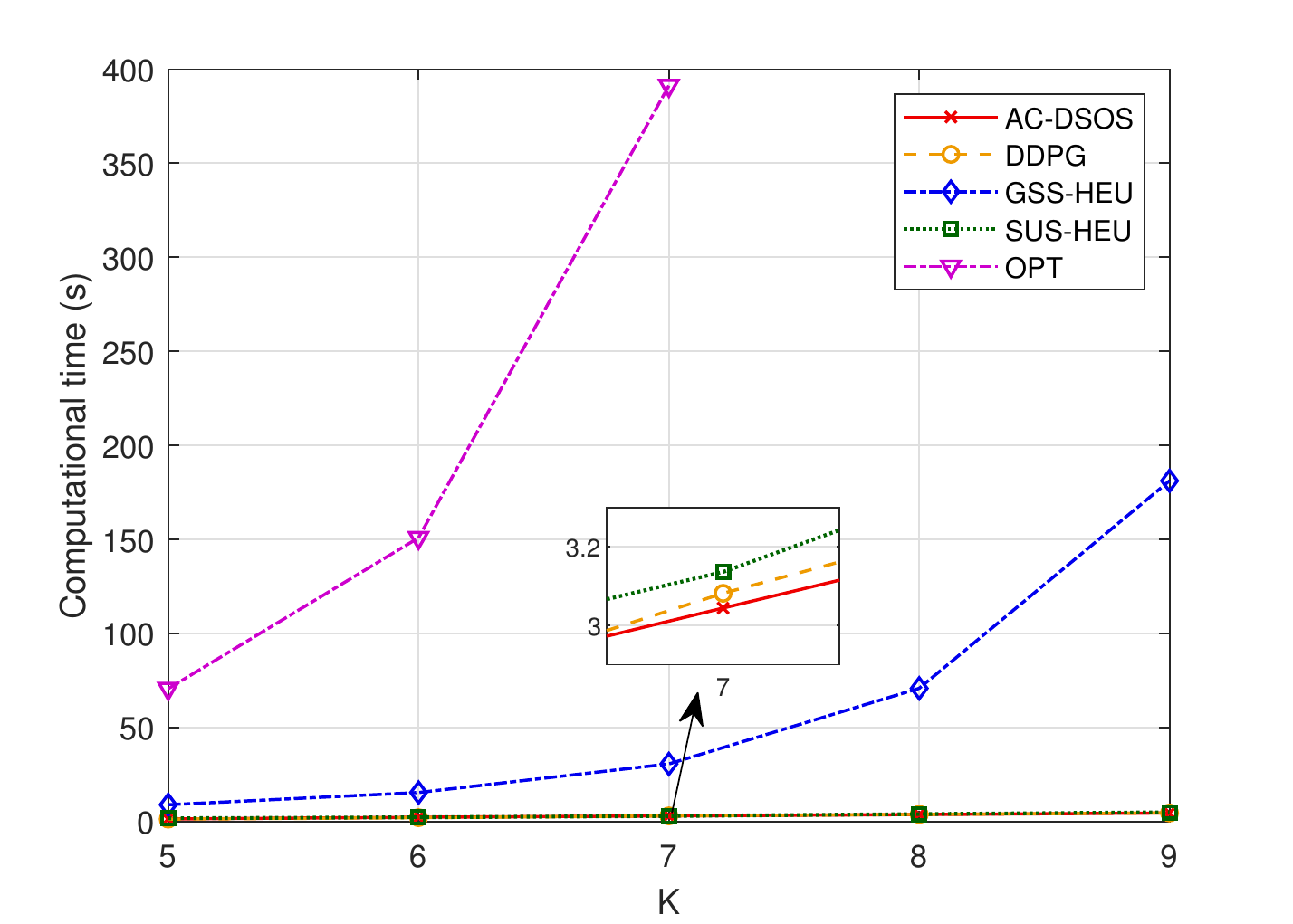}
\setlength{\belowcaptionskip}{-0.0cm}   
\captionsetup{font={small}}
\caption{Average computational time vs. $K$ ($T_{max}=160$).}
\label{fig:comp_time}
\end{center}
\end{figure}

Fig. \ref{fig:comp_time} compares the computational time with respect to $K$.
The computational time records from giving inputs to algorithms until returning the optimized results.
In GSS-HEU and the optimal approach, the computational time grows exponentially with $K$, whereas the proposed AC-DSOS along with DDPG and SUS-HEU maintain at a low magnitude and insensitive to the increase of $K$.
AC-DSOS saves 99.23\% and 92.86\% computational time compared to the optimal algorithm and the GSS-HEU when $K=7$.
This is due to the fact that DRL can provide online decisions based on the current environment state instead of solving the optimization problem directly.
The computational time of AC-DSOS slightly lower than DDPG and SUS-HEU.
However, by recalling Fig. \ref{fig:energy_alg}, AC-DSOS saves 29.94\% and 52.51\% energy compared with DDPG and SUS-HEU, respectively.

\begin{figure}[h]
\begin{center}
\centering
\vspace{-0.2cm}  
\includegraphics[scale=0.53]{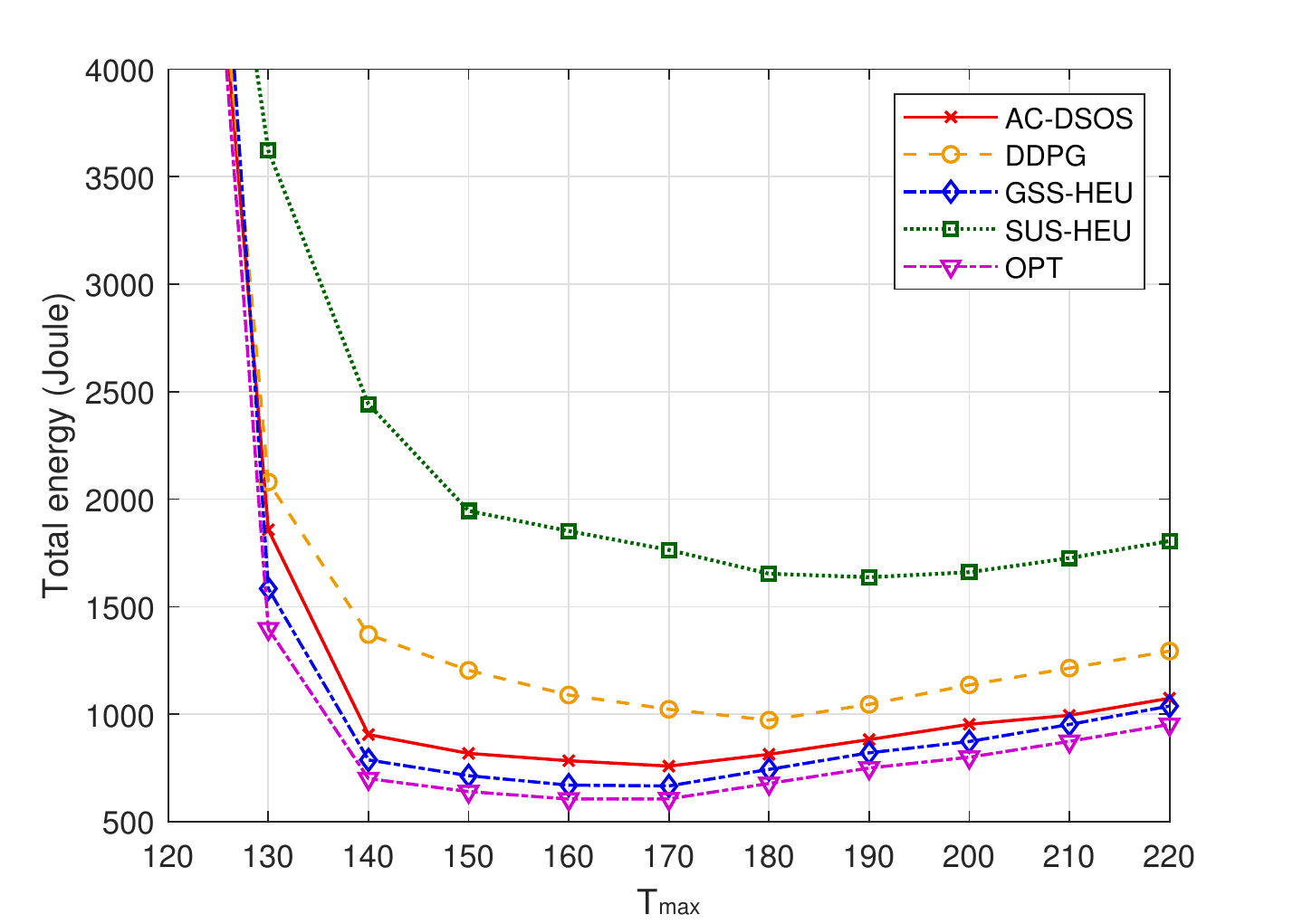}
\captionsetup{font={small}}
\caption{Total energy vs. $T_{max}$ $(K=7)$.}
\setlength{\belowcaptionskip}{-0.3cm}   
\label{fig:totalenergy}
\end{center}
\end{figure}

\begin{figure}[h]
\begin{center}
\centering
\vspace{-0.5cm}  
\includegraphics[scale=0.51]{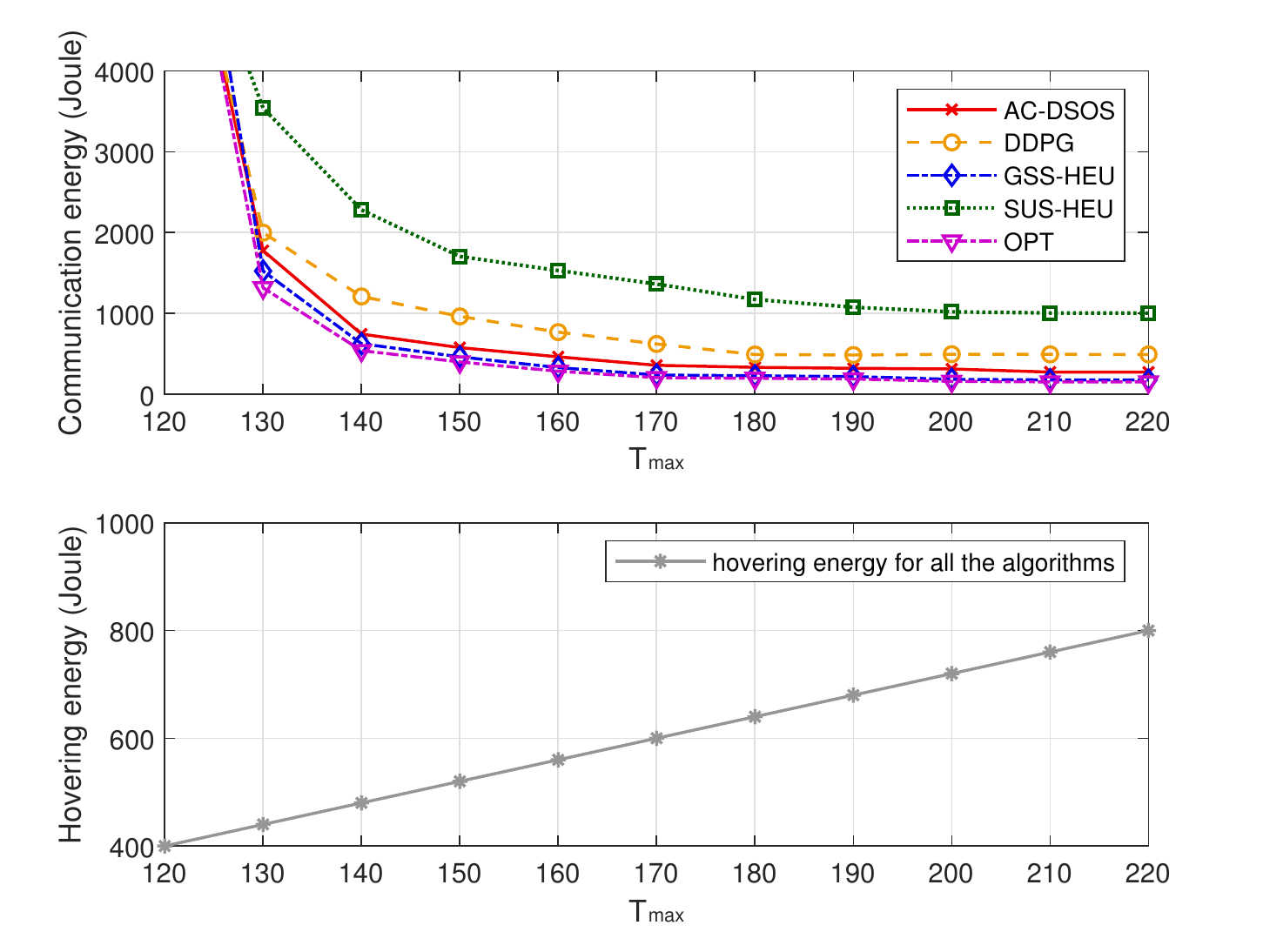}
\setlength{\belowcaptionskip}{-0.0cm}   
\captionsetup{font={small}}
\caption{Communication and hovering energy vs. $T_{{max}}$ $(K=7)$.}
\label{fig:comhovenergy}
\end{center}
\end{figure}

Fig. \ref{fig:totalenergy} demonstrates the total energy consumption with respect to $T_{max}$, and Fig. \ref{fig:comhovenergy} illustrates the communication energy and hovering energy separately.
From Fig. \ref{fig:totalenergy}, AC-DSOS outperforms DDPG by saving 21.37\% total energy in average.
The average gap between GSS-HEU and the optimal solution is 8.91\% smaller than that of AC-DSOS, but, from Fig. \ref{fig:comp_time}, GSS-HEU consumes nearly 126 times higher calculation time than AC-DSOS at $T_{max}=160$.
The energy-saving performance of SUS-HEU is worse than other algorithms and its gap to the optimum reaches  59.44\%.
Fig. \ref{fig:totalenergy} also shows that, as $T_{max}$ increases, the objective energy rapidly decreases first then grows steadily.
This can be explained via Fig. \ref{fig:comhovenergy}.
The objective energy consists of the communication energy and hovering energy.
From Fig. \ref{fig:comhovenergy}, the communication energy drops rapidly when $T_{max}<140$, and becomes stable after $T_{max}>180$.
Whereas, the hovering energy increases linearly with $T_{max}$ for all the algorithms.

Fig. \ref{fig:outagerewards} verifies the capability of the proposed reward function in dealing with feasibility issues, where a feasible solution is obtained only if the ratio of delivered demand over total demand in \textit{y}-axis achieves 100\%.
From Fig. \ref{fig:outagerewards}, the reward functions used in Eq. (\ref{eq:reward1}) and Eq. (\ref{eq:reward2}) fail to guarantee the feasibility of the solution.
For the re-designed reward, we evaluate the performance by setting $\epsilon$ to 1, 1.2, and 1.5.
A small $\epsilon$ means that transmitting more data can bring more rewards gain than saving energy.
When $\epsilon$ drops below 1.2, the feasibility issue can be solved.
Fig. \ref{fig:energyrewards} shows the objective energy with different $\epsilon$. 
It can be found that a smaller $\epsilon$ leads to more energy consumption.
Thus, an appropriate parameter $\epsilon$ lies at 1.2, enabling the after-learned solution to guarantee the demands while consuming less energy.

\begin{figure}[h]
\begin{center}
\centering
\vspace{-0.2cm}  
\includegraphics[scale=0.53]{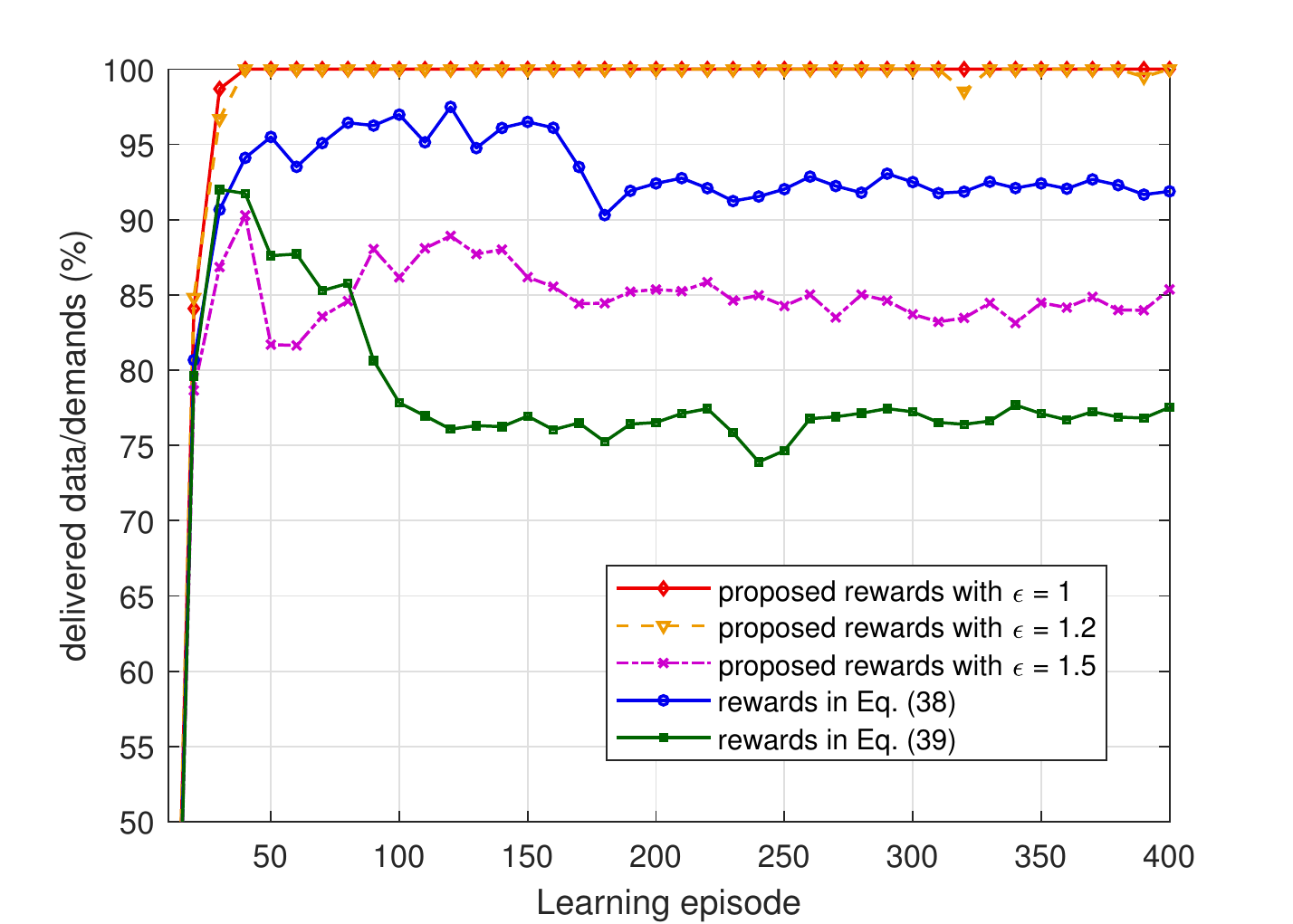}
\setlength{\belowcaptionskip}{-0.3cm}   
\captionsetup{font={small}}
\caption{Feasibility vs. learning episode.}
\label{fig:outagerewards}
\end{center}
\end{figure}

\begin{figure}[h]
\begin{center}
\centering
\vspace{-0.1cm}  
\includegraphics[scale=0.53]{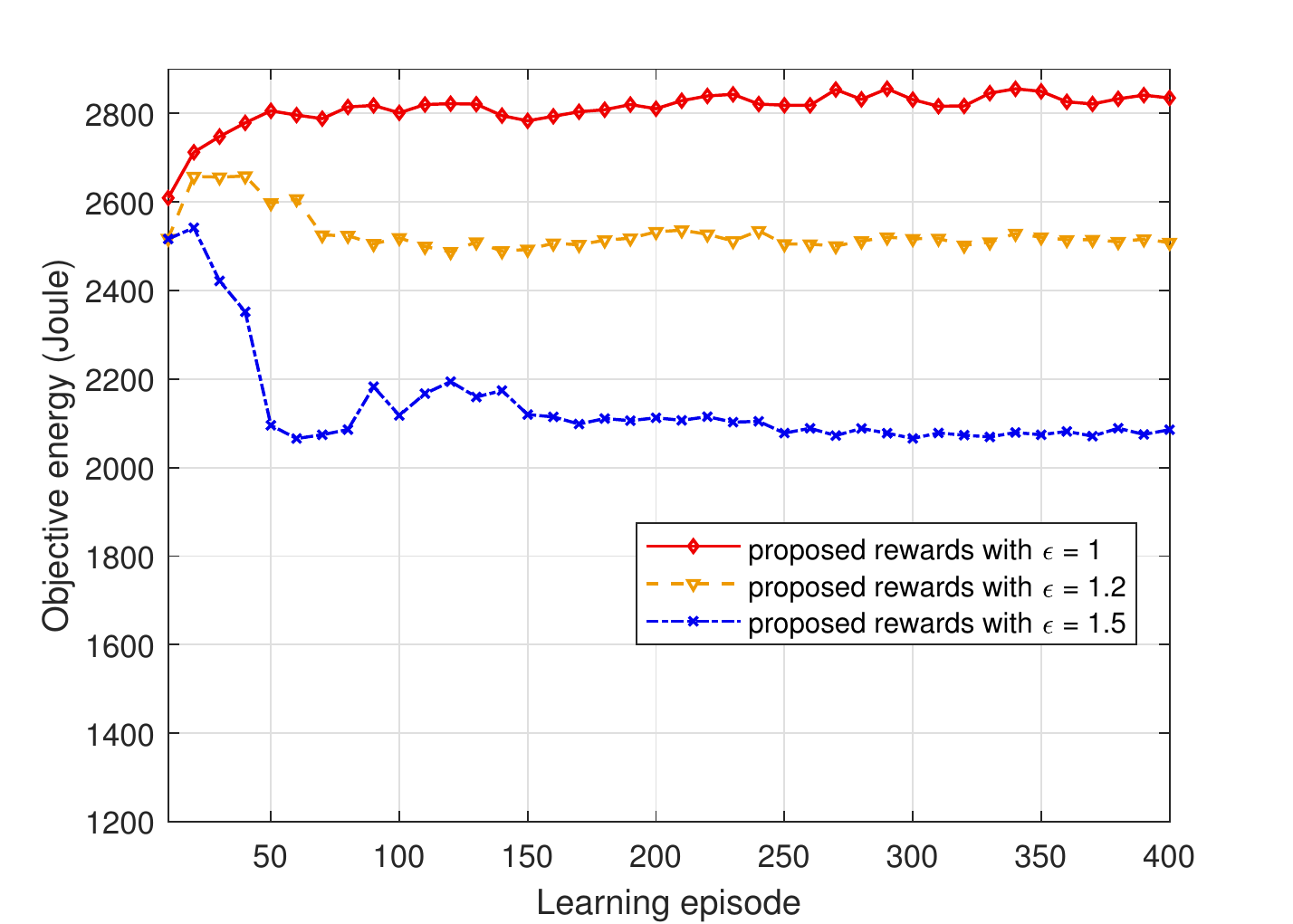}
\setlength{\belowcaptionskip}{-0.3cm}   
\captionsetup{font={small}}
\caption{Energy vs. learning episode.}
\label{fig:energyrewards}
\end{center}
\end{figure}

\begin{figure}[h]
\begin{center}
\centering
\vspace{-0.1cm}  
\includegraphics[scale=0.53]{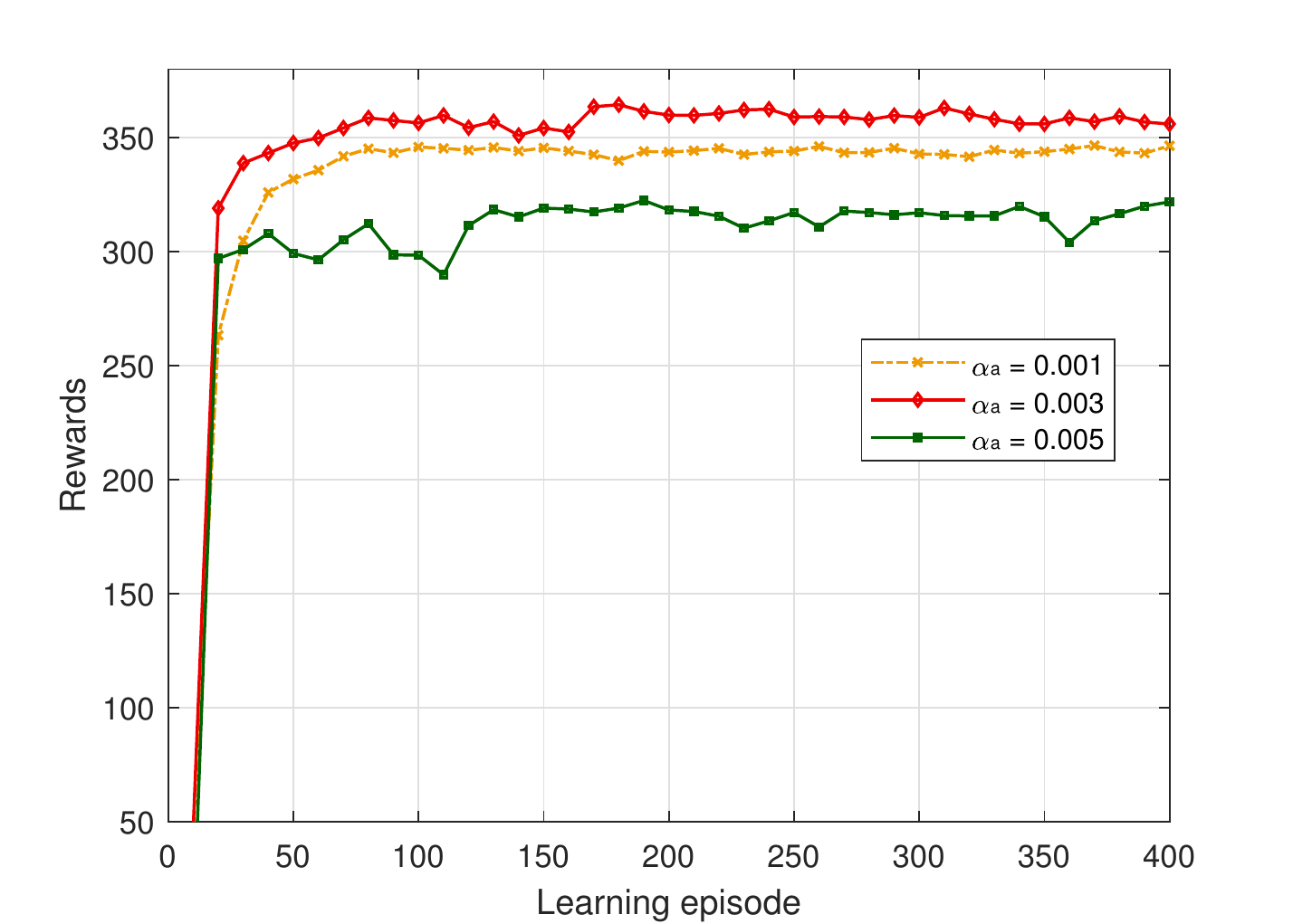}
\setlength{\belowcaptionskip}{-0.1cm}   
\captionsetup{font={small}}
\caption{Rewards vs. learning episode.}
\label{fig:convergence}
\end{center}
\end{figure}

Fig. \ref{fig:convergence} demonstrates the convergence of AC-DSOS with different actor's learning rate $\alpha_a$.
The \textit{x}-axis is the learning episode and the \textit{y}-axis is the reward value.
When $\alpha_a$ increases from 0.001 to 0.003, the reward value grows by 2.8\% at the convergence.
If $\alpha_a$ increases to 0.005, the curve fluctuates and the converged value is 7.2\% lower than the case of $\alpha_a = 0.001$.
Taking the actor as an example, the learning rate for the critic $\alpha_c$ has the same tendency.
In conclusion, the learning rates of the actor and critic are sensitive to the convergence, and need to be properly selected, e.g., 0.003 for the actor.

\section{Conclusion}\label{sec:6}
In this paper, we have investigated an energy minimization problem for UAV-aided communication systems from the perspective of AC-DRL.
The formulated problem is combinatorial and non-convex.
We provided an optimal relax-and-approximate method and proposed a GSS-based heuristic algorithm to solve the problem and serve as benchmarks.
To make the solutions adaptive to online operation, we propose an AC-DSOS algorithm.
Different from previous AC-DRL methods, the proposed AC-DSOS is able to deal with the huge discrete action space and guarantee the feasibility.
Numerical results have shown that AC-DSOS provides a good trade-off between energy efficiency and computational efficiency.
Furthermore, the re-designed reward function is effective to deal with the feasibility issue. 
An extension of the current work is to jointly optimize energy consumption in uplink, downlink communications, and UAV propulsion, e.g., UAV downlink data-delivery and uplink data-collection tasks co-exist among clusters. 

%

\ifCLASSOPTIONcaptionsoff
  \newpage
\fi



\begin{thebibliography}{1}



\bibitem{Mozaffariuav}
M.~Mozaffari, W.~Saad, M.~Bennis, Y.~H.~Nam, and M.~Debbah, ``A Tutorial on UAVs for Wireless Networks: Applications, Challenges, and Open Problems,'' in \emph{IEEE Communications Surveys \& Tutorials}, Mar. 2019.








\bibitem{ahmeduavenergy}
S.~Ahmed, M.~Z.~Chowdhury, and Y.~M.~Jang, ``Energy-Efficient UAV-to-User Scheduling to Maximize Throughput in Wireless Networks,'' in \emph{IEEE Access,} vol.~8, pp. 21215--21225, Jan. 2020.

\bibitem{ZhangJuav}
J.~Zhang, Y.~Zeng, and R.~Zhang, ''Spectrum and Energy Efficiency Maximization in UAV-Enabled Mobile Relaying,'' in Proc. \emph{IEEE International Conference on Communications,} pp. 1-6, May 2017.

\bibitem{zengenergyuav}
Y.~Zeng, J.~Xu, and R.~Zhang, ``Energy Minimization for Wireless Communication with Rotary-Wing UAV,'' in \emph{IEEE Transactions on Wireless Communications}, vol.~18, no.~4, pp. 2329--2345, Mar. 2019.


\bibitem{Tranuav}
D.~H.~Tran, T.~X.~Vu, S.~Chatzinotas, S.~ShahbazPanahi, and B.~Ottersten, ``Coarse Trajectory Design for Energy Minimization in UAV-Enabled Wireless Communications with Latency Constraints,'' in \emph{IEEE Transactions on Vehicular Technology}, Jun. 2020.


\bibitem{zhuuavantenna}
S.~Zhu, K.~Yang, J.~Ouyang, and Y.~Du, ``Cooperative Beamforming for UAV-Assisted Cognitive Relay Networks with Partial Channel State Information,'' in \emph{IEEE International Conference on Computer and Communications}, pp. 158--162, Dec. 2018.

\bibitem{songenergyuav}
Q.~Song and F.~Zheng, ``Energy Efficient Multi-Antenna UAV-Enabled Mobile Relay,'' in \emph{China Communications,} vol.~15, no.~5, pp. 41--50, May 2018.

\bibitem{jianguavdr}
F.~Jiang, K.~Wang, L.~Dong, C.~Pan, W.~Xu, and K.~Yang, ``Deep Learning Based Joint Resource Scheduling Algorithms for Hybrid MEC Networks,'' in \emph{IEEE Internet of Things Journal (Early Access),} Nov. 2019.

\bibitem{Shindluav}
M.~Shin, J.~Kim, and M.~Levorato, ``Auction-Based Charging Scheduling With Deep Learning Framework for Multi-Drone Networks,'' in \emph{IEEE Transactions on Vehicular Technology,} vol.~68, no.~5, pp. 4235--4248, May 2019.

\bibitem{leischeduling}
L.~Lei, L.~You, G.~Dai, T.~X.~Vu, D.~Yuan, and S.~Chatzinotas, ``A Deep Learning Approach for Optimizing Content Delivering in Cache-Enabled HetNet,'' in Proc. \emph{IEEE International Symposium on Wireless Communication Systems}, pp. 449--453, Aug. 2017.

\bibitem{Liu2uavenergy}
W.~Liu, P.~Si, E.~Sun, M.~Li, C.~Fang, and Y.~Zhang, ``Green Mobility Management in UAV-Assisted IoT Based on Dueling DQN,'' in Proc. \emph{IEEE International Conference on Communications,} pp. 1-6, May 2019.

\bibitem{IntroRL}
R.~S.~Sutton and A.G.~Barto, ``Reinforcement Learning: An Introduction'', Cambridge, Massachusetts, USA: MIT Press, 2018.

\bibitem{Hedqnlarge}
J.~He, M.~Ostendorf, X.~He, J.~Chen, J.~Gao, L.~Li, and L.~Deng, ``Deep Reinforcement Learning with A Combinatorial Action Space for Predicting Popular Reddit Threads,'' in \emph{arXiv preprint arXiv:1606.03667}, Jun. 2016.

\bibitem{Kondaac}
V.~R.~Konda and J.~N.~Tsitsiklis, ``Actor-Critic Algorithms,'' in \emph{Advances in Neural Information Processing Systems}, pp. 1008--1014, 2000.

\bibitem{liuuav}
C.~H.~Liu, Z.~Chen, J.~Tang, J.~Xu, and C.~Piao, ``Energy-Efficient UAV Control for Effective and Fair Communication Coverage: A Deep Reinforcement Learning Approach,'' in \emph{IEEE Journal on Selected Areas in Communications}, vol.~36, no.~9, pp. 2059--2070, Aug. 2018.

\bibitem{Liuuavenergy2}
C.~H.~Liu, Z.~Dai, Y.~Zhao, J.~Crowcroft, D.~O.~Wu, and K.~Leung, ``Distributed and Energy-Efficient Mobile Crowdsensing with Charging Stations by Deep Reinforcement Learning,'' in \emph{IEEE Transactions on Mobile Computing (Early Access),} Aug. 2019.

\bibitem{Caouavenergy}
Y.~Cao, L.~Zhang, and Y.~Liang, ``Deep Reinforcement Learning for Channel and Power Allocation in UAV-Enabled IoT Systems,'' in Proc. \emph{IEEE Global Communications Conference,} pp. 1--6, Dec. 2019.

\bibitem{eucnc}
Y. Yuan, L. Lei, T. X. Vu, S. Chatzinotas, and B. Ottersten, ``Actor-Critic Deep Reinforcement Learning for Energy Minimization in UAV-Aided Networks'', in Proc. \emph{IEEE EUCNC}, 2020.



\bibitem{Zhangcluster}
S.~Zhang, Y.~Zeng, and R.~Zhang, ``Cellular-Enabled UAV Communication: A Connectivity-Constrained Trajectory Optimization Perspective,'' in \emph{IEEE Transactions on Communications}, vol.~67, no.~3, pp. 2580--2604, Mar. 2019.

\bibitem{Yinuavcluster}
S.~Yin, L.~Li, and F.~R.~Yu, ``Resource Allocation and Basestation Placement in Downlink Cellular Networks Assisted by Multiple Wireless Powered UAVs,'' in \emph{IEEE Transactions on Vehicular Technology}, vol.~69, no.~2, pp. 2171--2184, Feb. 2020.

\bibitem{lifespan}
F.~Iannello and O.~Simeone, ``On the Optimal Scheduling of Independent, Symmetric and Time-Sensitive Tasks,'' in \emph{IEEE Transactions on Automatic Control}, vol.~58, no.~9, pp. 2421--2425, Sept. 2013.

\bibitem{yuandnn}
Y.~Yuan, T.~X.~Vu, L.~Lei, S.~Chatzinotas, and B.~Ottersten, ``Joint User Grouping and Power Allocation for MISO Systems: Learning to Schedule,'' in Proc. \emph{European Signal Processing Conference}, Nov. 2019.

\bibitem{Youricianuav}
C.~You and R.~Zhang, ``3D Trajectory Optimization in Rician Fading for UAV-Enabled Data Harvesting,'' in \emph{arXiv preprint arXiv:1901.04106}, Jan. 2019.

\bibitem{Yansurveyuav}
C.~Yan, L.~Fu, J.~Zhang, and J.~Wang, ``A Comprehensive Survey on UAV Communication Channel Modeling,'' in \emph{IEEE Access}, vol.~4, pp.107769--107792, Aug. 2019.

\bibitem{FSMC}
H.~Wang and N.~Moayeri, ``Finite-State Markov Channel-A Useful Model for Radio Communication Channels,'' in \emph{IEEE Transactions on Vehicular Technology}, vol.~44, no.~1, pp.~163--171, Feb. 1995.

\bibitem{flyinguav}
A.~Filippone, ``Flight Performance of Fixed and Rotary Wing Aircraft,'', Amsterdam, Netherlands, Elsevier Science Press, 2006.

\bibitem{McCormick}
G.~P.~McCormick, ``Computability of Global Solutions to Factorable Nonconvex Programs: Part I Convex Underestimating Problems,'' in \emph{Mathematical Programming}, vol.~10, no.~1, pp. 147-175, Dec. 1976.

\bibitem{wangbb}
W.~Zhang,  ``Branch-and-Bound Search Algorithms and Their Computational Complexity,'' Research Report, no.~ISI/RR-96-443, University of Southern California, May 1996.

\bibitem{MMKP}
M.~Hifi, M.~Michrafy, and A.~Sbihi, ``Heuristic Algorithms for the Multiple-Choice Multidimensional Knapsack Problem,'' in \emph{Journal of the Operational Research Society}, vol.~55, no.~12, pp. 1323--1332, Dec. 2004.

\bibitem{guigolden}
J.~Guillot, D.~R.~Leal, C.~R.~Algarín, and I.~Oliveros, ``Search for Global Maximal in Multimodal Functions by Applying Numerical Optimization Algorithms: A Comparison Between Golden Section and Simulated Annealing,'' in \emph{Computation}, vol.~7, no.~3, 2019. 





\bibitem{advantagefunc}
J.~Schulman, P.~Moritz, S.~Levine, M.~Jordan, and P.~Abbeel, ``High-Dimensional Continuous Control Using Generalized Advantage Estimation,'' in \emph{arXiv preprint arXiv:1506.02438}, Jun. 2015.

\bibitem{drlhuge}
G.~Dulac-Arnold, R.~Evans, H.~Hasselt, P.~Sunehag, T.~Lillicrap, J.~Hunt, T.~Mann, T.~Weber, T.~Degris, and B.~Coppin, ``Deep Reinforcement Learning in Large Discrete Action Spaces.,'' in \emph{arXiv preprint arXiv:1512.07679}, Dec. 2015.

\bibitem{Wanguavconstraint}
L.~Wang, P.~Huang, K.~Wang, G.~Zhang, L.~Zhang, N.~Aslam, and K.~Yang, ``RL-Based User Association and Resource Allocation for Multi-UAV Enabled MEC,'' in \emph{arXiv preprint arXiv:1904.07961}, Apr. 2019.

\bibitem{Luddpg}
Y.~Lu, H.~Lu, L.~Cao, F.~Wu, and D.~Zhu, ``Learning Deterministic Policy with Target for Power Control in Wireless Networks,'' in Proc. \emph{IEEE Global Communications Conference}, pp.~1--7, Dec. 2018.



\bibitem{weiuserschedule}
Y.~Wei, F.~R.~Yu, M.~Song, and Z.~Han, ``User Scheduling and Resource Allocation in HetNets with Hybrid Energy Supply: An Actor-Critic Reinforcement Learning Approach,'' in \emph{IEEE Transactions on Wireless Communications}, vol.~17, no.~1, pp. 680--692, Nov. 2017.

\bibitem{davidDDPG}
D.~Silver, G.~Lever, N.~Heess, T.~Degris, D.~Wierstra, and M.~Riedmiller, ``Deterministic Policy Gradient Algorithms,'' in \emph{Advances in Neural Information Processing Systems}, Jun. 2014.

\bibitem{Yoosus}
T.~Yoo and A.~Goldsmith, ``On The Optimality of Multiantenna Broadcast Scheduling Using Zero-Forcing Beamforming,'' in \emph{IEEE Journal on Selected Areas in Communications}, vol.~24, no.~3, pp. 528--541 Mar. 2006.

\bibitem{Hedrl}
Y.~He, Z.~Zhang, F.~R.~Yu, N.~Zhao, H.~Yin, V.~C.~Leung, and Y.~Zhang, ``Deep-Reinforcement-Learning-Based Optimization for Cache-Enabled Opportunistic Interference Alignment Wireless Networks,'' in \emph{IEEE Transactions on Vehicular Technology}, vol.~66, no.~11 pp. 10433--10445, Sept. 2017.



\end{thebibliography}
\end{document}